\newcommand{\bF}{\overline{\mathbb F}}
\newcommand{\PAut}{{\rm PAut}}
\newcommand{\inv}{^{-1}}
\newcommand{\F}{\mathbb F}
\newcommand{\N}{{\mathbb N}}
\newcommand{\Z}{{\mathbb Z}}
\newcommand{\CC}{\mathcal{C}}
\newcommand{\GL}{{\rm GL}}
\newcommand{\PGL}{{\rm PGL}}
\newcommand{\matriz}[1]{\begin{array} #1 \end{array}}
\newcommand{\pmatriz}[1]{\left(\begin{array} #1 \end{array}\right)}
\newcommand{\GEN}[1]{\langle #1 \rangle}
\title{An intrinsical description of group codes}
\author[J.J. Bernal]{Jos\'{e} Joaqu\'{\i}n Bernal}
\author[\'{A}. del R\'{\i}o]{\'{A}ngel del R\'{\i}o}
\author[J.J. Sim\'{o}n]{Juan Jacobo Sim\'{o}n}
\newtheorem{theorem}{Theorem}[section]
\newtheorem{lemma}[theorem]{Lemma}
\newtheorem{proposition}[theorem]{Proposition}
\newtheorem{definition}[theorem]{Definition}
\newtheorem{corollary}[theorem]{Corollary}
\newtheorem{remarks}[theorem]{Remarks}
\newtheorem{example}[theorem]{Example}
\theoremstyle{remark}
\theoremstyle{remark}
\begin{document}
\maketitle

\begin{abstract}
A (left) group code of length $n$ is a linear code which is the image of a (left) ideal
of a group algebra via an isomorphism $\F G \rightarrow \F^n$ which maps $G$ to the
standard basis of $\F^n$. Many classical linear codes have been shown to be group codes.
In this paper we obtain a criterion to decide when a linear code is a group code in terms
of its intrinsical properties in the ambient space $\F^n$, which does not assume an ``a
priori'' group algebra structure on $\F^n$. As an application we provide a family of
groups (including metacyclic groups) for which every two-sided group code is an abelian
group code.

It is well known that Reed-Solomon codes are cyclic and its parity check extensions are
elementary abelian group codes. These two classes of codes are included in the class of
Cauchy codes. Using our criterion we classify the Cauchy codes of some lengths which are
left group codes and the possible group code structures on these codes.
\end{abstract}

\bigskip

In this paper $\F=\F_q$ denotes the field with $q$ elements where $q$ is a power of a
prime $p$. We consider $\F$ as the alphabet of linear codes of length $n$ and so the
ambient space is $\F^n$, the $n$-dimensional vector space. The standard basis of $\F^n$
is denoted by $E=\{e_1,\dots,e_n\}$. For any group $G$, we denote by $\F G$ the group
algebra over $G$ with coefficients in $\F$.

Recall that a linear code $C\subseteq \F^n$ is said to be cyclic if and only if $C$ is
closed under cyclic permutations, that is, $(x_1,\dots,x_n)\in C$ implies
$(x_2,\dots,x_n,x_1)\in C$. For $C_n=\GEN{g}$, the cyclic group of order $n$, the
bijection $\pi:E\rightarrow C_n$ given by $\pi(e_i)=g^{i-1}$ extends to an isomorphism of
vector spaces $\phi:\F^n\rightarrow \F C_n$ and the cyclic codes in $\F^n$ are the
subsets $C$ of $\F^n$ such that $\phi(C)$ is an ideal of $\F C_n$.

If $G$ is a group of order $n$ and $C\subseteq \F^n$ is a linear code
then we say that $C$ is a {\em left $G$-code} (respectively, a {\em right $G$-code}; a
{\em $G$-code}) if there is a bijection $\phi:E\rightarrow G$ such that the linear
extension of $\phi$ to an isomorphism $\phi:\F^n\rightarrow \F G$ maps $C$ to a left
ideal (respectively, a right ideal; a two-sided ideal) of $\F G$. A {\em left group code}
(respectively, {\em group code}) is a linear code which is a left $G$-code (respectively,
a $G$-code) for some group $G$. A (left) cyclic group code (respectively, abelian group
code, solvable group code, etc.) is a linear code which is (left) $G$-code for some
cyclic group (respectively, abelian group, solvable group, etc.). In general, if
$\mathcal{G}$ is a class of groups then we say that a linear code $C$ is a (left)
$\mathcal{G}$ group code if and only if $C$ is a (left) $G$-code for some $G$ in
$\mathcal{G}$.

Note that the cyclic codes of length $n$ are $C_n$-codes. However, not every $C_n$-code
is a cyclic code. For example, the linear code $\{(a,a,b,b):a,b\in \F\}$ is not a cyclic
code but it is a $C_4$-code via the map $\phi:\{e_1,\dots,e_4\}\rightarrow C_4$ given by
$\phi(e_1)=1$, $\phi(e_2)=g^2$, $\phi(e_3)=g$ and $\phi(e_4)=g^3$. If one fixes a
bijection $\phi:E\rightarrow G$ to induce an isomorphism $\phi:\F^n\rightarrow \F G$ then
the (left) $G$-codes are precisely those codes of $\F^n$ which are permutation equivalent
to codes of the form $\phi\inv(I)$, for $I$ running on the (left) ideals of $\F G$. In
particular, the cyclic group codes are the codes which are permutation equivalent to
cyclic codes.

The aim of this paper is to give a criterion to decide if a linear code is a group code in
terms of its intrinsical properties in the ambient space, which has not an ``a priori''
group algebra structure. More precisely we prove that a subspace $C$ of $\F^n$ is a left
group code if and only if $\PAut(C)$, the group of permutation automorphisms of $C$,
contains a regular subgroup. Moreover, $C$ is a group code if and only if $\PAut(C)$
contains a regular subgroup $H$ such that $C_{S_n}(H)\subseteq \PAut(C)$. Using this we
prove that if $G=AB$, for $A$ and $B$ two abelian subgroups of $G$ then every $G$-code is
also an abelian group code. This extends a result of Sabin and Lomonaco \cite{SL}. We
present some examples showing that this cannot be extended to left group codes. Finally, we
study the class of Cauchy codes. This is a class of MDS codes introduced in \cite{D}, (see
also \cite{H}) which includes the class of generalized Reed-Solomon codes. We describe the
possible left group code structures of some families of Cauchy codes.

\section{Characterizing group codes}

In this section we give necessary and sufficient conditions for a subspace $C$ of $\F^n$
to be a (left) group code in terms of the group of permutation automorphisms of $C$. For this result we do not need to assume that $\F$ is finite.

Let $S_n$ denote the group of permutations on $n$ symbols, that is the group of
bijections of $\N_n=\{1,\dots,n\}$ onto itself. Recall that a subgroup $G$ of $S_n$ is
{\em regular} if and only if it is transitive and has order $n$ (equivalently, $|G|=n$
and $\sigma(x)\ne x$ for every $1\ne \sigma \in G$ and $x\in \N_n$). We consider $S_n$
acting by linear transformations on $\F^n$ via the rule:
    $$\sigma(e_i) = e_{\sigma(i)} \quad (\sigma\in S_n, i\in \N_n).$$

Two linear codes $C$ and $C'$ are permutation equivalent if $\sigma\left(C\right)=C'$ for some $\sigma \in S_n$. Following \cite{H}, we denote the group of permutation automorphisms of a linear code $C$
by
    $$\PAut(C)=\{\sigma\in S_n : \sigma(C)=C\}.$$

For a group $G$, let $Z(G)$ denote the center of $G$ and let $C_G(H)$ denote the
centralizer of $H$ in $G$, for each subgroup $H$ of $G$.

We will need the following lemma which is possibly well known.

\begin{lemma}\label{regular}
Let $H$ be a regular subgroup of $S_n$ and fix an element $i_0\in \N_n$. Let
$\psi:H\rightarrow \N_n$ be the bijection given by $\psi(h)=h(i_0)$. Then there is an
anti-isomorphism $\sigma:H\rightarrow C_{S_n}(H)$, mapping $h\in H$ to $\sigma_h$, where
    $$\sigma_h(i)=\psi\inv(i)\left( h(i_0) \right) \quad (i\in \N_n).$$

Moreover $\sigma_h=h$ for every $h\in Z(H)$ and so $Z(H)=Z(C_{S_n}(H))$.
\end{lemma}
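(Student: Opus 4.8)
The plan is to recognize the map in the statement as the conjugate of the right regular representation of $H$ by the bijection $\psi$, thereby reducing everything to the familiar fact that the centralizer of the left regular representation is the right regular representation. First I would record that $\psi$ is indeed a bijection: transitivity of $H$ gives surjectivity, while $|H|=n$ together with the orbit-stabilizer relation forces the stabilizer of $i_0$ to be trivial, giving injectivity. The key reformulation is the identity $\sigma_h = \psi \circ r_h \circ \psi\inv$, where $r_h \colon H \to H$ denotes right multiplication $r_h(g) = gh$; this follows from rewriting $\psi\inv(i)(h(i_0)) = (\psi\inv(i)\, h)(i_0) = \psi(\psi\inv(i)\, h)$. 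Since $r_h$ is a bijection of $H$ and $\psi$ is a bijection, this immediately shows $\sigma_h \in S_n$.

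From this conjugation formula the homomorphism-type properties drop out mechanically. Because $r_h \circ r_{h'} = r_{h'h}$, conjugating gives $\sigma_h \sigma_{h'} = \sigma_{h'h}$, so $\sigma$ is an anti-homomorphism; and $\sigma_h = \Id$ forces $r_h = \Id$, hence $h = e$, so $\sigma$ is injective. To see that each $\sigma_h$ centralizes $H$, I would compute both $g\circ \sigma_h$ and $\sigma_h \circ g$ applied to an arbitrary $i = g'(i_0)$ (writing $g' = \psi\inv(i)$) and check that both equal $(gg')(h(i_0))$, using only associativity of the action.

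The one step that genuinely uses regularity beyond the bijectivity of $\psi$ is surjectivity onto $C_{S_n}(H)$, which I expect to be the main point. Given $\tau \in C_{S_n}(H)$, I would set $h = \psi\inv(\tau(i_0))$ and verify $\tau = \sigma_h$ by writing any $i$ as $g(i_0)$ and using that $\tau$ commutes with $g$: then $\tau(i) = \tau(g(i_0)) = g(\tau(i_0)) = g(h(i_0)) = \sigma_h(i)$. This pins down $\tau$ completely from its single value at $i_0$, which is exactly where transitivity is essential. Finally, the statement $\sigma_h = h$ for $h \in Z(H)$ is a direct computation: for central $h$ and $k = \psi\inv(i)$ one has $\sigma_h(i) = (kh)(i_0) = (hk)(i_0) = h(i)$. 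For the center equality I would invoke that any anti-isomorphism carries the center of its domain onto the center of its codomain, so $\sigma(Z(H)) = Z(C_{S_n}(H))$; but $\sigma$ restricts to the identity on $Z(H)$ by the previous computation, whence $Z(H) = \sigma(Z(H)) = Z(C_{S_n}(H))$.
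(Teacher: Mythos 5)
Your proposal is correct and follows essentially the same route as the paper: the same bijection $\psi$, the same surjectivity argument (pinning down $\tau\in C_{S_n}(H)$ by its value at $i_0$ via transitivity), and the same computation showing $\sigma_h=h$ for central $h$. The only difference is cosmetic: you package $\sigma_h$ as $\psi\circ r_h\circ\psi^{-1}$, the conjugate of the right regular representation, which makes the anti-homomorphism and injectivity checks automatic where the paper verifies them by direct calculation.
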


\begin{proof}
It is clear that $\psi$ is a bijection, since $H$ is regular. Furthermore
$\psi\inv(i)(i_0) = i$, for every $i\in \N_n$ and $\psi\inv\left( h(i_0) \right)=h$, for
$h\in H$. Therefore, if $h,k\in H$ then
    $$\sigma_h k(i) = \sigma_h\left(k\left( \psi\inv(i)(i_0) \right) \right) =
    \psi\inv\left(\left[ k  \psi\inv(i)\right](i_0) \right) \left( h(i_0) \right)
    = \left[k\psi\inv(i)\right]\left( h(i_0) \right) = k \sigma_h(i).$$
This shows that $\sigma$ maps $H$ into $C_{S_n}(H)$.

Now we prove that $\sigma$ is an anti-homomorphism. If $h,k\in H$ and $i\in \N_n$ then
    $$\sigma_{hk}(i) = \psi\inv(i)hk(i_0) = \psi\inv\left(\psi\inv(i)h(i_0)\right)k(i_0)
    = \sigma_k\left(\psi\inv(i)h(i_0)\right) = \sigma_k(\sigma_h(i)).$$

If $\sigma_h=1$ then $\psi\inv(i)(i_0)=i=\sigma_h(i)= \psi\inv(i)h(i_0)$, for every $i\in
\N_n$. Thus $\psi\inv(i) = \psi\inv(i)h$, since $H$ is regular, and hence $h=1$. This
shows that $\sigma$ is injective.

Finally we prove that $\sigma(H)=C_{S_n}(H)$. Indeed, for each $x\in C_{S_n}(H)$ let
$h=\psi\inv(x(i_0))\in H$. Then $h(i_0) = \psi\inv(x(i_0))(i_0) = x(i_0) =
x\left(\left(\psi\inv(i)\right)\inv(i)\right) = \left(\psi\inv(i)\right)\inv x(i)$, for
every $i\in \N_n$. Hence $x(i) = \psi\inv(i) h(i_0) = \sigma_h(i)$ for every $i \in \N$ and
we have  $x=\sigma_h$.

The last calculation also shows that if $x\in Z(H)=H\cap C_{S_n}(H)$ then $h=x=\sigma_h$.
\end{proof}

We are ready to present our intrinsical characterization of (left) group codes.

\begin{theorem}\label{characterization}
Let $C$ be a linear code of length $n$ over a field $\F$ and $G$ a finite group of order
$n$.
\begin{enumerate}
\item $C$ is a left $G$-code if and only if $G$ is isomorphic to a transitive subgroup of $S_n$ contained in
$\PAut(C)$.
\item $C$ is a $G$-code if and only if $G$ is isomorphic to a transitive subgroup $H$ of $S_n$ such that $H\cup
C_{S_n}(H)\subseteq \PAut(C)$.
\end{enumerate}
\end{theorem}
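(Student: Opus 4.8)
The plan is to exploit the fact that a regular subgroup of $S_n$ is exactly the image of the left (or right) regular representation of an abstract group of order $n$, and to translate the ideal-theoretic conditions defining (two-sided) group codes into $\PAut$-invariance statements. First I would record the elementary reduction that in both items the word ``transitive'' may be replaced by ``regular'': a transitive subgroup of $S_n$ isomorphic to $G$ has order $|G|=n$, and a transitive subgroup of order $n$ is automatically regular, since its point stabilizers are then trivial. So throughout I work with a regular subgroup $H\le S_n$ with $H\cong G$, and it suffices to characterize left $H$-codes and $H$-codes.

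The technical core is a single identification. Fixing $i_0\in\N_n$ and the bijection $\psi:H\to\N_n$, $\psi(h)=h(i_0)$, of Lemma~\ref{regular}, I would define a bijection $\phi:E\to H$ by $\phi(e_i)=\psi\inv(i)$ and extend it linearly to a vector-space isomorphism $\phi:\F^n\to \F H$. A direct check, using $\left(\tau\,\psi\inv(i)\right)(i_0)=\tau(i)$, shows that $\phi$ intertwines the permutation action of $H$ on $\F^n$ with left multiplication on $\F H$, that is $\phi(\tau(v))=\tau\,\phi(v)$ for all $\tau\in H$ and $v\in\F^n$. The same computation applied to the elements $\sigma_h\in C_{S_n}(H)$ produced by Lemma~\ref{regular} gives $\phi(\sigma_h(v))=\phi(v)\,h$; in other words, under $\phi$ the centralizer $C_{S_n}(H)$ corresponds exactly to the right multiplications on $\F H$. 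This is the crux of item (2), and it is precisely the point where Lemma~\ref{regular} is used.

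For item (1), if $C$ is a left $G$-code then transporting the left regular representation through the given bijection $E\to G$ yields a regular subgroup $H\cong G$ of $S_n$; since a left ideal $I$ satisfies $gI=I$ for every $g\in G$, each such left multiplication fixes $C$, so $H\subseteq\PAut(C)$. Conversely, given a regular $H\cong G$ inside $\PAut(C)$, the isomorphism $\phi$ above carries $C$ to a subspace of $\F H$ stable under left multiplication by every element of $H$, hence, by linearity, under left multiplication by all of $\F H$; thus $\phi(C)$ is a left ideal and $C$ is a left $H$-code, equivalently a left $G$-code.

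Item (2) is then obtained by adding the right-sided condition to both implications. If $C$ is a $G$-code via $E\to G$, then $\phi(C)$ is a two-sided ideal, so besides $H\subseteq\PAut(C)$ the closure $Ig=I$ under right multiplication shows, via the identification $C_{S_n}(H)\leftrightarrow\{\rho_h\}$ of the second paragraph, that $C_{S_n}(H)\subseteq\PAut(C)$. Conversely, if $H\cup C_{S_n}(H)\subseteq\PAut(C)$, then $\phi(C)$ is a left ideal by item (1), and invariance under $C_{S_n}(H)$ forces $\phi(C)$ to be stable under every right multiplication, hence a right ideal as well; so $\phi(C)$ is two-sided and $C$ is a $G$-code. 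The only genuine obstacle is the bookkeeping of the second paragraph---verifying that one and the same $\phi$ sends $H$ to left multiplications and $C_{S_n}(H)$ to right multiplications---after which both equivalences follow from the principle that a group-element multiplication preserves an ideal if and only if it preserves $C$, together with linear extension.
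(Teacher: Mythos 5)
Your proof is correct and takes essentially the same route as the paper: both identify a regular subgroup $H\le\PAut(C)$ with left multiplications on $\F H$ via the bijection $\phi(e_i)=\psi\inv(i)$ (the paper's $\phi\inv(g)=e_{g(1)}$), and both use Lemma~\ref{regular} to identify $C_{S_n}(H)$ with the right multiplications, so that (left, two-sided) ideal conditions translate into invariance under $H$, respectively $H\cup C_{S_n}(H)$. The computations you flag as the ``technical core'' are exactly the ones the paper carries out around its equation~(\ref{Clave}).
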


\begin{proof}
Given a bijection $\phi:E\rightarrow G$, we also use $\phi$ to denote its linear
extension $\F^n\rightarrow \F G$, and we define $f=f_{\phi}:G\rightarrow S_n$ by setting
$e_{f(g)(i)}=\phi\inv(g\phi(e_i))$, for $g\in G$ and $i\in \N_n$. Then $f$ is a group
homomorphism.
Furthermore, $f(g)(i)=i$ if and only if $\phi\inv(g\phi(e_i))=e_i$ if and only if
$g\phi(e_i)=\phi(e_i)$ if and only if $g=1$. This shows that $f$ is injective and $H=f(G)$
is a regular subgroup of $S_n$. Let $e_{i_0}=\phi\inv(1)$. By Lemma~\ref{regular}, there is
an anti-isomorphism $\sigma:H\rightarrow C_{S_n}(H)$, defined by
$\sigma_h(i)=\psi\inv(i)h(i_0)$, where $\psi:H\rightarrow \N_n$ is the bijection given by
$\psi(h)=h(i_0)$. We will use several times the equality $\psi\inv(i)(i_0)=i$, for $i\in
\N_n$.

For every $i\in \N_n$ one has
    $$e_i = e_{\psi\inv(i)(i_0)} = e_{f (f\inv \psi\inv(i))(i_0)} =
    \phi\inv\left( f\inv \left( \psi\inv(i)  \right) \cdot \phi(e_{i_0}) \right) =
    \phi\inv f\inv \psi\inv(i).$$
Using this and the fact that $f:G\rightarrow H$ is an isomorphism, one has the following
equalities for every $h\in H$:
    \begin{eqnarray*}
    \sigma_h(e_i) &=& e_{\sigma_h(i)} = e_{(\psi\inv(i)h)(i_0)}
        = \phi\inv f\inv \psi\inv\left( (\psi\inv(i)h)(i_0) \right) \\
    &=& \phi\inv f\inv (\psi\inv(i)h) = \phi\inv \left(f\inv(\psi\inv(i)) \cdot f\inv(h) \right) =
        \phi\inv \left(\phi(e_i) \cdot f\inv(h) \right).
    \end{eqnarray*}
This implies
    \begin{equation}\label{Clave}
    \sigma_h(x) = \phi\inv(\phi(x) \cdot f\inv(h))
    \end{equation}
for every $x\in \F^n$ and $h\in H$.

Assume now that $C$ is a left $G$-code and let $\phi:E\rightarrow G$ be a bijection such
that $\phi(C)$ is a left ideal of $\F G$ and $f=f_{\phi}$. By the previous paragraph
$H=f(G)$ is a regular subgroup of $S_n$. Furthermore, by the definition of $f$, if $h\in
H$ and $\alpha=\sum \alpha_i e_i \in \F^n$, then
    $h(\alpha) = \sum \alpha_i e_{h(i)} =
    \sum \alpha_i \phi\inv \left( f\inv(h) \cdot \phi(e_i) \right)
    = \phi\inv \left( f\inv(h) \cdot \phi(\alpha) \right)$.
Hence $h(C) = \phi\inv\left(f\inv(h) \cdot \phi(C)\right) = C$, that is $H\subseteq
\PAut(C)$. If, moreover, $\phi(C)$ is a two-sided ideal of $\F G$ and $x\in C_{S_n}(H)$
then, by Lemma~\ref{regular}, there is $h\in H$ such that $x=\sigma_h$. Then $x(C) =
\sigma_h(C) = \phi\inv\left(\phi(C)\cdot f\inv(h)\right) = C$, by (\ref{Clave}). This gives
the sufficiency part of (1) and (2).

Conversely, assume that $G$ is isomorphic to a regular subgroup $H$ of $S_n$ contained in
$\PAut(C)$. One may assume without loss of generality that $G=H$. Let $\phi:E\rightarrow
G$ be the bijection given by $\phi\inv(g)=e_{g(1)}$. Then $h\phi(e_{g(1)}) = hg =
\phi(e_{(hg)(1)}) = \phi(h(e_{g(1)}))$, for every $h,g\in G$. Therefore
$h\phi(e_i)=\phi(h(e_i))$, for every $i\in \N_n$, since $G$ is transitive. Using this we
have $h\phi(C) = \phi(h(C))=\phi(C)$, because $G\subseteq \PAut(C)$. Thus $\phi(C)$ is a
left ideal of $\F G$.

Assume, additionally that $C_{S_n}(G)\subseteq \PAut(C)$ and let $f=f_{\phi}$. If $g\in G$
then $f(g)(e_i) = e_{f(g)(i)} = \phi\inv(g\phi(e_i)) = e_{(g\phi(e_i))(1)} =
g(e_{\phi(e_i)(1)}) = g(e_i)$. Thus $f(g)=g$ for every $g\in G$. By (\ref{Clave}), $\phi(C)
g = \phi(C)f\inv(g)=\phi(\sigma_g(C)) = \phi(C)$, for every $g\in G$, and thus $\phi(C)$ is
a two-sided ideal of $\F G$.
\end{proof}

\begin{corollary}
Let $C$ be a linear code of length $n$ over a field and let $\mathcal{G}$ be a class of
groups.
\begin{enumerate}
\item $C$ is a left $\mathcal{G}$ group code if and only if $\PAut(C)$ contains a regular subgroup of $S_n$ in
$\mathcal{G}$.
\item $C$ is a $\mathcal{G}$ group code if and only if $\PAut(C)$ contains a regular subgroup $H$ of $S_n$ in
$\mathcal{G}$ such that $C_{S_n}(H)\subseteq \PAut(C)$.
\end{enumerate}
\end{corollary}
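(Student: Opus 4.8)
The plan is to derive this directly from Theorem~\ref{characterization} by unwinding the definition of a (left) $\mathcal{G}$ group code. Recall that, by definition, $C$ is a left $\mathcal{G}$ group code precisely when $C$ is a left $G$-code for some $G\in\mathcal{G}$, and similarly in the two-sided case. The first thing I would observe is that a code of length $n$ can only be a (left) $G$-code when $|G|=n$, since the defining bijection $\phi:E\rightarrow G$ forces $|G|=|E|=n$. Hence in the definition of a $\mathcal{G}$ group code only the members of $\mathcal{G}$ of order $n$ are relevant.

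Next I would reconcile the terminology of the Theorem with that of the Corollary. The Theorem is phrased in terms of transitive subgroups of $S_n$, whereas the Corollary speaks of regular subgroups. These coincide here: a transitive subgroup of $S_n$ isomorphic to a group $G$ of order $n$ has order $n$ and is transitive, hence regular; conversely every regular subgroup is transitive. I would also use the standing convention that a class of groups $\mathcal{G}$ is closed under isomorphism, so that for a subgroup $H$ of $S_n$ the statements ``$H\in\mathcal{G}$'' and ``$H$ is isomorphic to a group in $\mathcal{G}$'' are interchangeable.

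With these remarks in place, part (1) follows by a chain of equivalences. If $C$ is a left $\mathcal{G}$ group code then it is a left $G$-code for some $G\in\mathcal{G}$ with $|G|=n$; by Theorem~\ref{characterization}(1), $G$ is isomorphic to a transitive subgroup $H$ of $S_n$ contained in $\PAut(C)$, and by the previous paragraph $H$ is regular and lies in $\mathcal{G}$. Conversely, if $\PAut(C)$ contains a regular subgroup $H\in\mathcal{G}$, then $H$ is a transitive subgroup of $S_n$ isomorphic to itself and contained in $\PAut(C)$, so Theorem~\ref{characterization}(1) shows that $C$ is a left $H$-code, whence a left $\mathcal{G}$ group code. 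Part (2) is proved by the identical argument, invoking Theorem~\ref{characterization}(2) instead and carrying along the extra hypothesis $C_{S_n}(H)\subseteq\PAut(C)$ in both directions.

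I expect no genuine obstacle here; the content is entirely contained in Theorem~\ref{characterization}, and the work is the routine bookkeeping of matching ``transitive subgroup of order $n$'' with ``regular subgroup'' and of passing from a single group to a class closed under isomorphism. The only point requiring a moment's care is the observation that being a (left) $G$-code forces $|G|=n$, which is exactly what guarantees that the transitive subgroups produced by the Theorem are in fact regular.
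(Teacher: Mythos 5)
Your proposal is correct and matches the paper's (implicit) argument: the corollary is stated as an immediate consequence of Theorem~\ref{characterization}, with exactly the bookkeeping you describe --- identifying ``transitive subgroup of order $n$'' with ``regular subgroup'' and quantifying over the groups in $\mathcal{G}$ of order $n$.
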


\section{One-dimensional group codes}

Even if one-dimensional codes are not relevant for applications, it is educating to see
how one can use Theorem~\ref{characterization} to decide which one-dimensional codes are
group codes.
%

\begin{proposition}\label{Dim1}
Let $\F$ be a finite field, $n\ge 1$ and $0\ne v\in \F^n$. Then $\F v$, the
one-dimensional subspace spanned by $v$, is a left group code if and only if the entries
of $v$ are of the form $u,\xi u,\xi^2 u,\dots,\xi^{h-1} u$, each of them appearing
$s=n/h$ times in $v$, for some $u\in \F^*$ and $\xi$ an $h$-th primitive root of unity.

Assume that $v$ satisfies the above conditions and let $G$ be a group. Then $\F v$ is a
left $G$-code if and only if $G$ has a normal subgroup $N$ of order $s$ such that $G/N$
is cyclic (of order $h$).
\end{proposition}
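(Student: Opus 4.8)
The plan is to apply Theorem~\ref{characterization}(1): $\F v$ is a left $G$-code if and only if $G$ is isomorphic to a regular subgroup of $S_n$ contained in $\PAut(\F v)$. Thus everything reduces to describing the regular subgroups of $\PAut(\F v)$. Because $\F v$ is one-dimensional and every $\sigma\in S_n$ acts invertibly, $\sigma\in\PAut(\F v)$ precisely when $\sigma(v)=\mu_\sigma v$ for a unique scalar $\mu_\sigma\in\F^*$, and $\sigma\mapsto\mu_\sigma$ is a group homomorphism $\mu\colon\PAut(\F v)\to\F^*$. Its image is a finite subgroup of $\F^*$, hence cyclic; its kernel is $\{\sigma:\sigma(v)=v\}$, the group of permutations stabilizing every fibre of $i\mapsto v_i$. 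Expanding $\sigma(v)=\mu_\sigma v$ coordinatewise gives the relation $v_{\sigma(i)}=\mu_\sigma\inv v_i$, which I use throughout.

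For the necessity in both statements, suppose $\F v$ is a left group code, so there is a regular subgroup $H\subseteq\PAut(\F v)$; let $\xi$ generate the cyclic group $\mu(H)$, of order $h$, and put $N=H\cap\Ker\mu$. Fix $i_0$ and set $u=v_{i_0}$. Since $H$ is transitive, every position $j$ equals $\eta(i_0)$ for some $\eta\in H$, whence $v_j=\mu_\eta\inv u$; as $v\ne 0$ this forces $u\in\F^*$ and shows that every entry lies in $\GEN{\xi}u$, each value $\xi^c u$ being attained because $\mu|_H$ surjects onto $\GEN{\xi}$. Counting through the bijection $\eta\mapsto\eta(i_0)$, the positions with $v_j=\xi^c u$ correspond to the elements $\eta$ with $\mu_\eta=\xi^{-c}$, i.e. to a coset of $N$; hence each value occurs exactly $|N|$ times and $|N|=n/h=:s$. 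This gives the entry description in the first statement. Moreover $N=\Ker(\mu|_H)$ is normal in $H$ with $H/N\cong\GEN{\xi}$ cyclic of order $h$, so transporting $G\cong H$ produces a normal subgroup of $G$ of order $s$ with cyclic quotient of order $h$, which is the necessity in the second statement.

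For the sufficiency in the second statement, assume $v$ has the stated form, write $L_c=\{i:v_i=\xi^c u\}$ (so $|L_c|=s$), and let $G$ have a normal subgroup $N$ of order $s$ with $G/N$ cyclic of order $h$. Choose $g\in G$ whose image generates $G/N$, so that $1,g,\dots,g^{h-1}$ represent the cosets of $N$, and fix bijections $g^cN\to L_c$; together they define a bijection $\beta\colon G\to\N_n$ with $\beta(g^cN)=L_c$. Transporting the left regular representation of $G$ along $\beta$ yields a regular subgroup $H=\{\rho_x:x\in G\}\cong G$ of $S_n$. The crucial point is that left multiplication shifts colours uniformly: if $xN=g^dN$ and $y\in g^cN$ then $xy\in g^{c+d}N$ since $G/N$ is abelian, so $\rho_x(L_c)=L_{c+d}$ for all $c$; consequently $v_{\rho_x\inv(j)}=\xi^{-d}v_j$, that is $\rho_x(v)=\xi^{-d}v\in\F v$. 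Hence $H\subseteq\PAut(\F v)$ and $\F v$ is a left $G$-code. The sufficiency in the first statement is then the special case $G=C_n$, whose unique subgroup of order $s$ is normal with cyclic quotient of order $h$. I expect this construction to be the main obstacle: the whole argument hinges on choosing the identification of $G$ with the positions so that the left regular action preserves $\F v$, and what makes it succeed is that $G/N$ is abelian, forcing the induced action on the $h$ colours to be a single global cyclic shift rather than a position-dependent permutation.
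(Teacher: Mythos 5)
Your proposal is correct and follows essentially the same route as the paper: both reduce to Theorem~\ref{characterization}(1), exploit the scalar character $\sigma\mapsto\lambda_\sigma$ on $\PAut(\F v)$ and the coset decomposition of a regular subgroup with respect to the kernel of that character, and for sufficiency build the same coset-respecting bijection between $G$ and the coordinate positions (the paper phrases this as the explicit left ideal $\F\varphi(v)$ in $\F G$, you phrase it as the transported left regular representation landing in $\PAut(\F v)$). The only cosmetic difference is that the paper first computes $\PAut(\F v)$ completely as $K\rtimes\GEN{\sigma}$ and deduces the entry structure from transitivity, whereas you extract it directly from the regular subgroup, which is slightly leaner but not a different idea.
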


\begin{proof}
Write $v=(v_1,\dots,v_n)$. We start describing $\PAut(\F v)$. Let $\sigma\in S_n$. Then
$\sigma\in \PAut(\F v)$ if and only if $\sigma(v)=\lambda_{\sigma} v$, for some
(necessarily unique) $\lambda_{\sigma}\in \F^*$. Moreover, $\sigma\mapsto
\lambda_{\sigma}$ defines a group homomorphism $\lambda:\PAut(\F v)\rightarrow \F^*$. Let
$\xi$ be a generator of the image of $\lambda$ and let $K=Ker \lambda$. Then
$\sigma(v)=\xi v$ and hence, if $u\in \F$ appears exactly $s$ times as an entry of $v$
then $\xi u$ also appears exactly $s$ times as an entry of $v$. Therefore, after
permuting the entries of $v$, one may assume that $v$ is of the form
    $$
    \matriz{{rcl}
    v&=&(
    u_1,\dots,u_1,\xi u_1,\dots,\xi u_1,\xi^2 u_1,\dots,\xi^2 u_1,
    \xi^{h-1}u_1,\dots,\xi^{h-1}u_1,\\
    & & \hspace{0.2cm} u_2,\dots,u_2,\xi u_2,\dots,\xi u_2,\xi^2 u_2,\dots,\xi^2 u_2,
    \xi^{h-1}u_2,\dots,\xi^{h-1}u_2,\\
    & & \hspace{0.2cm} \dots,\\
    & & \hspace{0.2cm} u_k,\dots,u_k,\xi u_k,\dots,\xi u_k,\xi^2 u_k\dots,\xi^2 u_k,,
    \xi^{h-1}u_k,\dots,\xi^{h-1}u_k).}$$
for $u_1,\dots,u_k\in \F$ such that $u_x\ne \xi^z u_y$ for every $1\le x < y\le k$ and
$z\in \Z$, and each $\xi^z u_i$ appears $s_i$ times as an entry of $v$. Then
$K=(S_{s_1}\times \dots \times S_{s_1})\times (S_{s_2}\times \dots \times S_{s_2}) \times
\dots \times (S_{s_k}\times \dots \times S_{s_k})$, where the $i$-th block has $h$ copies
of $S_{s_i}$ and the $j$-th $S_{s_i}$ is identified with the subgroup of $S_n$ formed by
the permutations which fix the coordinates $k$ for which $v_k\ne \xi^j u_i$. Consider the
permutation $\sigma\in S_n$ given by
    $$\sigma(k)=\left\{ \matriz{{ll}
    k+s_i, & \text{if } h(s_1+\dots+s_{i-1})<k\le h(s_1+\dots+s_{i-1})+(h-1)s_i; \\
    k-(h-1)s_i,& \text{if } h(s_1+\dots+s_{i-1})+(h-1)s_i<k\le h(s_1+\dots+s_i).}
    \right.$$
Clearly, $\sigma$ has order $h$ and $\lambda_{\sigma}=\xi$. Thus $\PAut(\F v)=K\rtimes
\GEN{\sigma}$. This provides the description of $\PAut(\F v)$.

Observe that, with the notation of the previous paragraph, $\PAut(\F v)$ is transitive if
and only if $k=1$. In that case (up to a permutation of coordinates)
    \begin{equation}\label{v}
    v=(u,\dots,u,\xi u,\dots,\xi u,\xi^2 u,\dots,\xi^2 u, \xi^{h-1}u,\dots,\xi^{h-1}u)
    \end{equation}
for some $u\in \F^*$ and some $h$-th primitive root of unity $\xi$ in $\F^*$ and each
$\xi^j u$ is repeated $s=n/h$ times as a coordinate of $v$.

If $\F v$ is a left $G$-code, for $G$ a group (of order $n$) then, by
Theorem~\ref{characterization}, we may assume that $G$ is a transitive subgroup of $S_n$
contained in $\PAut(\F v)$. Therefore $v$ is as in (\ref{v}) and $G$ contains an element
$\tau$ with $\tau(1)=h+1$. This implies that $\tau=k\sigma$ for some $k\in K$. If
$N=K\cap G$ then $G=\GEN{N,\tau}$ and $\tau$ has order $h$ modulo $N$. Thus $N$ is a
normal subgroup of index $h$ in $G$ and $G/N$ is cyclic. This proves one implication for the two statements of the proposition.

Conversely, assume that $v$ is as in (\ref{v}) and let $G$ be a group of order $n$ with a
normal subgroup $N$ of index $h$ in $G$ such that  $G/N$ is cyclic. Let $g$ be an element
of $G$ of order $h$ modulo $N$. Fix a bijection $\varphi:\N_s \rightarrow N$, which we
extend to obtain a bijection $\varphi:\N_n\rightarrow G$ by setting
$\varphi(js+r)=\varphi(r)g^j$, for $1\le r \le s$, and $0\le j < h$. Then $\varphi$
induces an isomorphism of vectors spaces $\varphi:\F^n\rightarrow \F G$, such that
$\varphi(v)=u\left(\sum_{n\in N} n \right)\left(\sum_{i=0}^{h-1} \xi^i g^i\right)$. Then
$g\varphi(v)=\xi\inv \varphi(v)$ and $n\varphi(v)=\varphi(v)$ for every $n\in N$. Thus
$\F\varphi(v)$ is a left ideal of $\F G$ and hence $\F v$ is a left $G$-code.
\end{proof}

\begin{corollary}
If a one-dimensional vector space is a left group code then it is a cyclic group code.
\end{corollary}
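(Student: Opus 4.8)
The plan is to read the corollary off Proposition~\ref{Dim1}, which already isolates all the arithmetic content, so that nothing beyond an elementary fact about cyclic groups remains. Suppose $\F v$ is a one-dimensional left group code of length $n$. First I would invoke the first assertion of Proposition~\ref{Dim1} to conclude that the entries of $v$ are of the form $u, \xi u, \ldots, \xi^{h-1}u$, each occurring $s = n/h$ times, for some $u \in \F^*$ and some primitive $h$-th root of unity $\xi$; in particular $h$ divides $n$ and $s = n/h$ is a positive integer.

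With $v$ now known to satisfy these conditions, the second assertion of Proposition~\ref{Dim1} reduces the problem to a purely group-theoretic check: $\F v$ is a left $G$-code precisely when $G$ has a normal subgroup $N$ of order $s$ with $G/N$ cyclic of order $h$. Thus it suffices to produce such an $N$ inside the cyclic group $C_n = \GEN{g}$ of order $n$.

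This verification is immediate. Since $s$ divides $n$, the group $C_n$ has the (unique) subgroup $N = \GEN{g^h}$ of order $s$; being a subgroup of an abelian group it is normal, and the quotient $C_n/N$, as a quotient of a cyclic group, is cyclic of order $n/s = h$. Applying the second assertion of Proposition~\ref{Dim1} with $G = C_n$ then gives that $\F v$ is a left $C_n$-code. Since $\F C_n$ is commutative, every left ideal of $\F C_n$ is two-sided, so $\F v$ is in fact a (two-sided) $C_n$-code; in either reading it is a cyclic group code.

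I do not anticipate any serious obstacle, as the whole combinatorial and coding-theoretic content has been packaged into Proposition~\ref{Dim1} and what remains is the elementary fact that a cyclic group of order $n = sh$ has a normal subgroup of order $s$ with cyclic quotient. Two small points deserve attention: that the parameters $h$ and $s = n/h$ furnished by the first assertion are exactly those demanded by the second, so a single choice works throughout; and that the conclusion of Proposition~\ref{Dim1} yields a \emph{left} $C_n$-code, which upgrades to a genuine (two-sided) cyclic group code precisely because $\F C_n$ is commutative.
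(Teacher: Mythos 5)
Your proof is correct and follows essentially the same route as the paper: apply the first part of Proposition~\ref{Dim1} to pin down the shape of $v$, then the second part with $G=C_n$, using that a cyclic group of order $n=sh$ has a (normal) subgroup of order $s$ with cyclic quotient. Your extra remark that the left $C_n$-code is automatically two-sided because $\F C_n$ is commutative is a correct and welcome detail that the paper leaves implicit.
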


\begin{proof}
Assume that $\F v$ is a one-dimensional group code of length $n$. By
Proposition~\ref{Dim1}, $v$ has $h$ different coordinates each of them repeated $s=n/h$
times. Then $\F v$ is a cyclic group code because the cyclic group of order $n$ has a
subgroup of order $s$.
\end{proof}

\section{Group codes versus abelian group codes}

In this section we present a class of groups $\mathcal{G}$, containing properly the class
of abelian groups, such that every $\mathcal{G}$ group code is also an abelian group
code. We also show that for every non-abelian finite group $G$ there is a finite field
$\F$ such that some left $G$-code over $\F$ is not an abelian group code.

\begin{theorem}\label{TwoSidedAbelian}
Let $G$ be a finite group. Assume that $G$ has two abelian subgroups $A$ and $B$ such
that every element of $G$ can be written as $ab$ with $a\in A$ and $b\in B$. Then every
$G$-code is an abelian group code.
\end{theorem}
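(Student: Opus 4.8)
The plan is to produce an abelian regular subgroup of $S_n$ contained in $\PAut(C)$ and then apply Theorem~\ref{characterization}(1). This is enough because the group algebra of an abelian group is commutative, so every left ideal is two-sided; hence a left code for an abelian group is automatically a (two-sided) abelian group code. To begin, let $C$ be a $G$-code. By Theorem~\ref{characterization}(2), after identifying $G$ with its image I may assume that $G=H$ is a regular subgroup of $S_n$ with $H\cup C_{S_n}(H)\subseteq \PAut(C)$, and that $A,B$ are abelian subgroups of $H$ with $H=AB$.

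Next I would twist the factor $B$ into the centralizer. Using the anti-isomorphism $\sigma:H\rightarrow C_{S_n}(H)$ of Lemma~\ref{regular}, set $L=A\,\sigma(B)$. Since $A\subseteq H\subseteq\PAut(C)$ and $\sigma(B)\subseteq C_{S_n}(H)\subseteq\PAut(C)$, the set $L$ lies in $\PAut(C)$. Moreover $L$ is abelian: the restriction of $\sigma$ is an anti-isomorphism $B\rightarrow\sigma(B)$, so $\sigma(B)$ is abelian because $B$ is; $A$ is abelian; and every element of $A\subseteq H$ commutes with every element of $\sigma(B)\subseteq C_{S_n}(H)$. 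Thus $A$ and $\sigma(B)$ are commuting abelian subgroups, whence their product $L$ is an abelian subgroup of $\PAut(C)$.

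It remains to check that $L$ is regular, and here the key computation is that $\sigma$ does not move the base point used to define it. Fix $i_0\in\N_n$ as in Lemma~\ref{regular}; since $H$ is regular, $\psi\inv(i_0)=1$, so $\sigma_h(i_0)=\psi\inv(i_0)(h(i_0))=h(i_0)$ for all $h\in H$. Consequently, for $a\in A$ and $b\in B$ one gets $(a\,\sigma_b)(i_0)=a(\sigma_b(i_0))=a(b(i_0))=(ab)(i_0)$. As $ab$ runs over $H=AB$ and $H$ is transitive, the values $(ab)(i_0)$ exhaust $\N_n$, so $L$ is transitive. Finally, a transitive abelian subgroup of $S_n$ is automatically regular: the point stabilizers are conjugate, hence all equal in an abelian group, and their common value lies in the kernel of the faithful action on $\N_n$, so it is trivial. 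Thus $L$ is an abelian regular subgroup of $\PAut(C)$, and by Theorem~\ref{characterization}(1) together with the reduction above, $C$ is an abelian group code. I expect the main subtlety to be exactly this twist: replacing $B$ by $\sigma(B)$ is what forces the two factors to commute (because $\sigma(B)$ centralizes $H$), while the identity $\sigma_h(i_0)=h(i_0)$ is what guarantees that transitivity survives the replacement.
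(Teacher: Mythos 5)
Your proof is correct, and the central construction --- replacing $B$ by its image under the anti-isomorphism $\sigma$ of Lemma~\ref{regular} so that the two abelian factors commute elementwise, and taking the abelian subgroup $A\sigma(B)\subseteq\PAut(C)$ --- is exactly the one the paper uses (the paper calls it $K=\GEN{A,B_1}$ with $B_1=\sigma(B)$). Where you genuinely diverge is in verifying that this subgroup is regular. The paper does it in two separate steps: it first computes $|K|=n$ by showing $[K:A]=[G:A]$ through a chain of index identities that leans on the equality $Z(H)=Z(C_{S_n}(H))$ from Lemma~\ref{regular}, and it then shows that every point stabilizer in $K$ is trivial by an explicit computation (writing a stabilizing element as $a\sigma_b$, deducing $agb=g$, and using $A\cap B\subseteq Z(G)$ together with $\sigma_b=b$ for central $b$). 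You instead prove transitivity directly from the identity $\sigma_h(i_0)=h(i_0)$ (valid because $\psi\inv(i_0)=1$), which gives $(a\sigma_b)(i_0)=(ab)(i_0)$ and hence that the orbit of $i_0$ is all of $\N_n$ since $H=AB$ is transitive; you then invoke the standard fact that a transitive abelian permutation group is regular. This is shorter and cleaner, avoiding both the index bookkeeping and the stabilizer computation, at the cost of importing that one standard fact. Both arguments conclude identically via Theorem~\ref{characterization} and the observation that left ideals of a commutative group algebra are two-sided.
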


\begin{proof}
Assume that $C$ is a $G$-code of length $n$. Then, by Theorem~\ref{characterization}, $G$
is isomorphic to a regular subgroup $H$ of $S_n$ such that $H\cup C_{S_n}(H)\subseteq
\PAut(C)$. One may assume without loss of generality that $G=AB=H$.

Let $\sigma:G\longrightarrow C_{S_n}(G)$ be the anti-isomorphism defined in Lemma
\ref{regular}, with respect to some $i_0\in \N_n$. Set $A_1=\sigma(A)$ and $B_1=\sigma(B)$.
Since $\sigma$ is an anti-isomorphism, $A_1$ and $B_1$ are abelian groups and
$C_{S_n}(G)=B_1 A_1$.

Now, we consider the group $K=\GEN{A, B_1}\le \GEN{G,C_{S_n}(G)} \le \PAut(C)$. Then $K$ is
abelian, since $A$ and $B_1$ are abelian subgroups, $A\subseteq G$ and $B_1\subseteq
C_{S_n}(G)$. By Theorem~\ref{characterization}, it is enough to prove that $K$ has order
$n$ and the stabilizer in $K$ of every element of $\N_n$ is trivial. To prove the first it
is sufficient to show that $\left[K:A\right]=\left[G:A\right]$. By Lemma \ref{regular},
$B\cap Z(G)=\sigma(B\cap Z(G))=B_1\cap Z(C_{S_n}(G))$. Then $[B:B\cap Z(G)]=[B_1:B_1\cap
Z(C_{S_n}(G))]$ and $\left[B\cap Z(G):B\cap A\right]=\left[B_1\cap Z(C_{S_n}(G)):B_1\cap
A\right]$. Thus $\left[G:A\right]=\left[AB:A\right]=\left[B:B\cap
A\right]=\left[B_1:B_1\cap A\right]= \left[AB_1:A\right]=\left[K:A\right]$, as desired.

Finally, let $i\in \N_n$ and $k\in K$ with $k(i)=i$. Then $k=a \beta$ for some $a\in A$
and $\beta=\sigma_{b}\in B_1$ with $b\in B$. Moreover, $i=g(i_0)$ for a unique $g\in G$.
Therefore $g(i_0)=i=k(i)=a \beta g(i_0)=a \sigma_{b}(\psi(g)) = a g b(i_0)$, where $\psi$
is the map defined in Lemma~\ref{Clave}. Thus $a g b=g$, because $G$ is regular. Set
$g=a' b'$, for $a'\in A$ and $b'\in B$. Then $a' a b b'=a a' b' b = agb=g=a'b'$ and hence
$a b=1$. Thus $b^{-1}=a\in A\cap B \subseteq Z(G)$. Using the last part of
Lemma~\ref{Clave} once more, we have $\beta=\sigma_{b}=b$ and so $a \beta=1$.
\end{proof}

Recall that a group $G$ is metacyclic if it has a cyclic normal subgroup $N$ such that
$G/N$ is cyclic. Sabin and Lomonaco \cite{SL} have proved that if $C$ is a $G$-code of a
split metacyclic group $G$ (i.e. $G$ is a semidirect product of cyclic groups) then $C$
is an abelian group code. Using the previous result this can be extended to arbitrary
metacyclic group codes.

\begin{corollary}
If $C$ is a metacyclic group code then $C$ is an abelian group code.
\end{corollary}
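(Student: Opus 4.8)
The plan is to reduce the statement directly to Theorem~\ref{TwoSidedAbelian}. By definition, a metacyclic group code $C$ is a $G$-code for some metacyclic group $G$, so it suffices to show that every metacyclic group $G$ admits two abelian subgroups $A$ and $B$ with $G=AB$, and then to invoke the theorem.

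First I would unwind the definition: a metacyclic $G$ has a cyclic normal subgroup $N$ with $G/N$ cyclic. I would set $A=N$, which is abelian because it is cyclic, and choose an element $g\in G$ whose image generates the quotient $G/N$, setting $B=\GEN{g}$, again abelian because it is cyclic.

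Next I would verify the factorization $G=AB$ in the exact form required by the hypothesis of Theorem~\ref{TwoSidedAbelian}. Given $x\in G$, its class in $G/N$ lies in $\GEN{gN}$, so $xN=g^jN$ for some integer $j$; hence $xg^{-j}\in N$ and we may write $x=(xg^{-j})g^j$ with $a=xg^{-j}\in A$ and $b=g^j\in B$. Thus every element of $G$ is a product $ab$ with $a\in A$ and $b\in B$.

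With $A$ and $B$ abelian and $G=AB$, Theorem~\ref{TwoSidedAbelian} immediately yields that every $G$-code is an abelian group code, and in particular $C$ is. I do not expect any genuine obstacle here: the argument is a routine application of the theorem, and the only point to watch is writing the factorization in the order $ab$ (with $a\in A$, $b\in B$) rather than $ba$, which is automatic from the computation above.
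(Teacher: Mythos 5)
Your proof is correct and is exactly the argument the paper intends: the paper leaves the corollary as an immediate consequence of Theorem~\ref{TwoSidedAbelian}, taking $A=N$ cyclic normal and $B=\GEN{g}$ with $gN$ generating the cyclic quotient, so that $G=AB$ with both factors abelian. Your verification of the factorization $x=(xg^{-j})g^j$ fills in the (routine) detail the paper omits.
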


Theorem~\ref{TwoSidedAbelian} provides a family of groups $\mathcal{G}$ such that every
two-sided $\mathcal{G}$ group code is also an abelian group code. We do not know any
example of a group code which is not an abelian group code. For left group codes the
situation is completely different.

\begin{proposition}\label{Left}
For every non-abelian group $G$ and every prime $p$ not dividing the order of $G$ there is a left
$G$-code over some field of characteristic $p$ which is not an abelian group code.
\end{proposition}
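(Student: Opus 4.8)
The plan is to exhibit, for a suitable field $\F$ of characteristic $p$, a left $G$-code $C$ whose permutation automorphism group contains no regular abelian subgroup; by the Corollary to Theorem~\ref{characterization} this is precisely what it means for $C$ not to be an abelian group code. First I would take $\F$ to be a splitting field for $G$ of characteristic $p$, which exists because $p\nmid |G|$ forces, via Maschke's theorem, the semisimplicity of $\F G$, and over a splitting field $\F G\cong\prod_i M_{d_i}(\F)$ is a product of full matrix algebras. Since $G$ is non-abelian, some $d_i=d\geq 2$; fix such a non-linear irreducible $\rho$ and let $I$ be a minimal left ideal of $\F G$ lying in the component $M_d(\F)$, so that $\dim_\F I=d$ and $I\cong V_\rho$ as a left $\F G$-module. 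With $\phi:E\to G$ the bijection $\phi\inv(g)=e_{g(1)}$ used in Theorem~\ref{characterization}, put $C=\phi\inv(I)$. By Theorem~\ref{characterization}(1), $C$ is a left $G$-code, realised through the regular subgroup $f(G)\subseteq\PAut(C)$.

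Suppose, for contradiction, that $C$ is an abelian group code, so that $\PAut(C)$ contains a regular abelian subgroup $B$. Here I would extend scalars to a field splitting both $G$ and $B$, say $\bF$: this leaves $\PAut(C)$ unchanged, keeps $C\otimes\bF$ irreducible of dimension $d\geq2$ over $\bF$ (as $\F$ splits $G$), and now makes $B$ diagonalisable. Viewing $\bF^n\cong\bF B$ as the regular module of $B$, it is a multiplicity-free sum of the $n$ distinct one-dimensional characters of $B$; hence the $B$-submodule $C\otimes\bF$ is a sum of $d$ distinct character lines and admits a basis $c_1,\dots,c_d$ of common $B$-eigenvectors. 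Fixing a base point $i_0$, each such eigenvector is, up to scalar, the Fourier vector $\sum_{b\in B}\chi(b)\,e_{b(i_0)}$, and so has all $n$ coordinates nonzero. By contrast, since $C\otimes\bF$ is $G$-irreducible, Burnside's theorem gives that the image of $G$ in $\End_{\bF}(C\otimes\bF)=M_d(\bF)$ spans the whole matrix algebra, while the image of $B$ is simultaneously diagonal; thus $\End_B(C\otimes\bF)=\bF^d$ has dimension $d$ whereas $\End_G(C\otimes\bF)=\bF$ by Schur's lemma.

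The crux, and the step I expect to be the main obstacle, is to convert this structural clash into a genuine contradiction, i.e.\ to show that a minimal left ideal affording a non-linear irreducible cannot also carry a regular abelian permutation symmetry. My plan is to exploit that $B$ is self-centralising: by Lemma~\ref{regular} applied to the abelian group $B$ one has $\sigma_h=h$ for every $h\in B$, so $C_{S_n}(B)=B$, and the commutant of $\bF B$ in $M_n(\bF)$ is exactly the Fourier-diagonal (Cartan) subalgebra $\bF B$. I would then study the subgroup $\GEN{f(G),B}\subseteq\PAut(C)$ and analyse how $f(G)$ moves the Fourier eigenlines of $B$ inside $\bF^n$, aiming to show that compatibility of the full-support eigenbasis $c_1,\dots,c_d$ with the $G$-action forces the image of $G$ on $C\otimes\bF$ to be monomial relative to this basis, which contradicts the Burnside density $\rho(G)\,\bF=M_d(\bF)$ for $d\geq2$. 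The delicate point is precisely that one does \emph{not} have $f(G)\subseteq N_{S_n}(B)$ a priori, so a conjugate of $B$ need not stabilise the same eigenlines; controlling this interaction of the two sharply transitive actions of $f(G)$ and $B$ on the common coordinate set $\N_n$ is where the careful bookkeeping will be required, and is the heart of the argument.
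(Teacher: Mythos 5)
Your construction breaks down at the step you yourself flag as the heart of the argument, and it cannot be repaired in the form you propose, for two reasons. First, the contradiction you are aiming at is not a contradiction: an irreducible representation of degree $d\ge 2$ can be monomial with respect to some basis and still have image spanning $M_d(\bF)$ (the two-dimensional irreducible representation of $S_3$ is monomial, being induced from a character of $C_3$, yet Burnside-dense; more generally the group of monomial matrices spans all of $M_d(\bF)$). So even after all the ``careful bookkeeping'', showing that the $G$-action is monomial relative to the $B$-eigenbasis would prove nothing. Second, and decisively, the statement you are trying to prove about an \emph{arbitrary} minimal left ideal is false. Take $G=S_3$ and any field $\F$ of characteristic $p>3$ (a splitting field, since the degree-$2$ irreducible is rational). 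Right multiplication by a transposition $t$ permutes the minimal left ideals of the $M_2(\F)$-component; as $\rho(t)$ has eigenvalues $\pm1$, exactly two of them satisfy $It=I$. For such an $I$, equation (\ref{Clave}) shows that the right translation $\sigma_{f(t)}$ lies in $\PAut(\phi\inv(I))$, and if $A$ is the Sylow $3$-subgroup then $\GEN{f(A),\sigma_{f(t)}}\cong C_6$ is abelian and regular (left and right translations commute, meet trivially since $Z(S_3)=1$, and the orbits $Ag\GEN{t}$ have size $6$ because $\gcd(|A|,2)=1$). So this $\phi\inv(I)$ \emph{is} an abelian group code, and no argument can show otherwise. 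Any correct version of your approach must therefore choose the ideal, and the field, with care --- which is precisely the content that is missing.

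The paper sidesteps all of this with a short counting argument. The number of abelian group codes of length $n$ in characteristic $p$ is bounded independently of the field: an abelian group algebra $\F H$ with $p\nmid |H|=n$ has at most $2^{n}$ ideals, there are finitely many abelian groups of order $n$, and finitely many bijections $E\rightarrow H$. On the other hand, a simple component $M_m(\F)$ with $m\ge 2$ (which exists for every field of characteristic $p$ once it exists for one, e.g.\ a splitting field) contributes at least $|\F|$ pairwise distinct minimal left ideals, hence at least $|\F|$ distinct left $G$-codes. Taking $|\F|$ large enough forces some left $G$-code outside the bounded family of abelian group codes. This is non-constructive where your attempt is constructive, but it is complete; to get an explicit example one must, as the paper does in the example following the proposition, verify directly for a specific ideal that $\PAut(C)$ contains no regular abelian subgroup.
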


\begin{proof}
Assume that $p$ is a prime not dividing $|G|$. If $c$ is the number of conjugacy classes of $G$ then the number of
ideals of $\F G$ is at most $2^c$, for every field $\F$ of characteristic $p$. Therefore, there is a bound on the
number of $G$-codes which is independent of the field. In particular, the number of abelian group codes of a given
length is bounded by a number not depending on the field used. However, if $G$ is non-abelian, then there is a field
$K$ such that a simple factor of $KG$ is isomorphic to $M_m(K)$ for some $m\ge 2$. Then a simple factor of $\F G$ is
isomorphic to $M_m(\F)$ for every field extension $\F/K$. Since the left ideals of $M_m(\F)$ generated by matrices of
the form
    $$\pmatriz{{ccccc}
    1 & a & 0 & \dots & 0 \\
    0 & 0 & 0 & \dots & 0 \\
    \dots & \dots & \dots & \dots & \dots \\
    0 & 0 & 0 & \dots & 0
    } \quad (a\in \F)$$
are all different, $\F^n$ has at least $|\F|$ different left $G$-codes. Thus, the number
of left $G$-codes for the different fields of characteristic $p$ increases with the order
of the field and so it is not limited.
\end{proof}

\begin{example}{\rm
Since the proof of Proposition~\ref{Left} is not constructive it is worth to present an
example of a left group code which is not an abelian group code. Let $\F=\F_{11}$ be the
field with 11 elements. Every $2$-dimensional cyclic group code of length 6 over $\F$ is
permutation equivalent to one of the following codes:
    $$\matriz{{rcl}
    C_1&=&\{\left(\lambda,\mu,\mu-\lambda,-\lambda,-\mu,\lambda-\mu\right) \mid
    \lambda,\mu\in \F \}, \\
    C_2&=&\{\left(\lambda,\mu,-\mu-\lambda,\lambda,\mu,-\mu-\lambda\right) \mid
    \lambda,\mu\in \F \}, \\
    C_3&=&\{\left(\lambda,\mu,\lambda,\mu,\lambda,\mu\right)\mid \lambda,\mu\in \F \}.}$$
Then one can easily check that the subspace $C$ of $\F^6$ generated by
$u=\left(2,5,-7,2,-7,5\right)$ and $v=\left(4,-3,-1,-4,1,3\right)$ is not an abelian
group code. However $C$ is a left $S_3$-code. Indeed, $A=(1,2,3)(4,5,6)$ and
$B=(1,4)(2,6)(3,5)$ belong to $\PAut(C)$, since $A(u)=5u+4v$, $B(u)=u$, $A(v)=5(v-u)$ and
$B(v)=-v$. Moreover, $\GEN{A,B}$ is a regular subgroup of $S_6$ isomorphic to $S_3$ and
the claim follows from Theorem~\ref{characterization}.
 }\end{example}

\section{Cauchy codes}

In this section we give some necessary and sufficient conditions for some Cauchy codes to
be left group codes and describe the possible left group code structures on such codes.
Our main tool is a description of the group of permutation automorphisms due to D\"{u}r
\cite{D} (see also \cite{H}) which we present in a slightly different form.

We first recall the definition of Cauchy codes. For that we need to introduce some  notations. Let $\bF$ denote
the projective line over $\F$. Each element of $\bF$ is represented either by its homogeneous coordinates, denoted
$[x,y]$, or by an element of $\F\cup \{\infty\}$ in such a way that $x\in \F$ and $[x,1]$ represent the same element in
$\bF$, and $\infty$ and $[1,0]$ also represent the same element in $\bF$. We use the coordinatization map $\varphi:\bF
\rightarrow \F^2$ defined as follows:
    $$\varphi(x) =
    \left\{\matriz{{ll} (x,1), & \text{if } x\in \F; \\ (1,0), & \text{if } x=\infty.}\right.$$

We evaluate a  homogeneous polynomial $F\in \F[X,Y]$ on an element $z\in \bF$ by setting $F(z)=F(\varphi(z))$. For
an element $\alpha=(\alpha_1,\dots,\alpha_n)\in \bF^n$ we denote the set of entries of $\alpha$ by $L_{\alpha}$.

\begin{definition}
 Given $1\le k < n$, $\alpha=(\alpha_1,\dots,\alpha_n)\in
\bF^n$, without repeated entries, and a map $f:L_{\alpha}\rightarrow \F^*$, the $k$-dimensional
{\em Cauchy code} with location vector $\alpha$ and scaling map $f$ is the code
    $$\CC_k(\alpha,f)=\left\{(f(\alpha_1)P(\alpha_1), \dots, f(\alpha_n)P(\alpha_n)):
    P\in \F[X,Y], \text{ homogeneous of degree } k-1\right\}.$$
\end{definition}

Note that our notation is slightly different from that of \cite{H}. The set $L_{\alpha}$
is called the location set of $\CC_k(\alpha,f)$.

Every Cauchy code is MDS and its dual is another Cauchy code (see \cite{H} for details).
Moreover, if $\sigma\in S_n$ then $\sigma(\CC_k(\alpha,f))=\CC(\sigma(\alpha),f)$ and in
particular the class of Cauchy codes is closed under permutations.

Examples of Cauchy codes are the generalized Reed-Solomon codes which correspond to
Cauchy codes with location set included in $\F$. In particular, the $k$-dimensional narrow
sense Reed-Solomon codes are those Cauchy codes of the form $\CC_k(\alpha,1)$ with location
vector $\alpha=(1,\xi,\xi^2,\dots,\xi^{q-2})$, for $\xi$ a primitive element of $\F$.
Furthermore the parity check extension of the narrow sense Reed-Solomon code
$\CC_k(\alpha,1)$ is the Cauchy code $\CC_k(\overline{\alpha},1)$ with
$\overline{\alpha}=(1,\xi,\xi^2,\dots,\xi^{q-2},0)$. (Here $1$ denotes the constant map
$x\mapsto 1$.)

As mentioned at the beginning of this section our aim is to obtain criteria to decide when
a Cauchy code $\CC_k(\alpha,f)$ is a left $G$-code.  Since a linear code and its dual
have the same automorphism groups, they are (left) $G$-codes for the same groups $G$.
Furthermore the dual of a Cauchy code is a Cauchy code too and $\CC_1(\alpha,f)$ is the
one-dimensional code generated by $(f(\alpha_1),\dots,f(\alpha_n))$ (see \cite{H} for
details). This shows that the cases $k=1$ and $k=n-1$  reduce to Proposition~\ref{Dim1}.
Thus, in the remainder of this section, we assume that $2\le k \le n-2$.

Let $\Gamma=\PGL_2(\F)$, the 2-dimensional projective general linear group. We identify
$\Gamma$ with the group of homographies of $\bF$, so that $\GL_2(\F)$ is considered
acting on the left on $\bF$ by
    $$\pmatriz{{cc} a&b\\c&d} [x,y] = [ax+by , cx+dy].$$
For a subset $X$ of $\bF$ let $\Gamma_X=\{T\in \Gamma:T(X)=X\}$. For $a\in \bF$, let
$\Gamma_a=\Gamma_{\{a\}}$. For example, $\Gamma_{\infty}$ is formed by the classes
represented by upper triangular invertible matrices and
    \begin{equation}\label{NH}
    \Gamma_{\infty}\simeq N\rtimes H, \text{ with }
    N=\left\{ U_b=\pmatriz{{cc} 1 & b \\ 0 & 1}:b\in \F \right\}
    \text{ and } H=\left\{\pmatriz{{cc} a & 0 \\ 0 & 1}: a\in \F^*\right\}.
    \end{equation}
Moreover $N$ is an elementary abelian $p$-group of order $q$ and $H$ is cyclic of order
$q-1$.

For every $T\in \GL_2(\F)$, let $\theta_T:\bF\rightarrow \F^*$ be defined by
    $$\theta_T(z)=\left\{\matriz{{ll} \text{second coordinate of } T(\varphi(z)), & \text{ if it is non-zero}; \\
    \text{first coordinate of } T(\varphi(z)), & \text{ otherwise}.}\right.$$

Fix a subset $L$ of $\bF$. If $T\in \GL_2(\F)$ then we denote the restrictions of
$T:\bF\rightarrow \bF$ and $\theta_T:\bF\rightarrow \F^*$ to $L$ as $T_L$ and
$\theta_{T,L}$ respectively. If $f,g:L\rightarrow \F^*$ then let $f\cdot g: L \rightarrow
\F^*$ be given by
    $$(f\cdot g)(z)=f(z)g(z) \quad (z\in L).$$
We say that $f$ and $g$ are equivalent, denoted $f\equiv g$, if $g=\lambda f$ for some
$\lambda\in \F^*$.

The following theorem describes when two Cauchy codes are equal.

\begin{theorem}\label{Equal}\cite[Theorem~2.6]{H}
Let $2\le k \le n-2$ and consider two $k$-dimensional Cauchy codes.
$C=\CC_k(\alpha,f)$ and $C'=\CC_k(\alpha',f')$ of length $n$. Then $C=C'$ if and only if there exists
$T\in \PGL_2(\F)$ such that $\alpha'_i=T(\alpha_i)$ and $f'\circ T_{L_{\alpha}} \equiv
\theta_{T,L_{\alpha}}^{k-1}\cdot f$.

In particular, $C$ and $C'$ are permutation equivalent if and only if there exists $T\in
\PGL_2(\F)$ such that $T(L_{\alpha})=L_{\alpha'}$ and $f'\circ T_{L_{\alpha}} \equiv
\theta_{T,L_{\alpha}}^{k-1}\cdot f$.
\end{theorem}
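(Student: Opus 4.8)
The plan is to prove both implications, with the converse being a short computation and the direct implication carrying all the weight through the rigidity of rational normal curves. The engine of the whole argument is one transformation identity. For $T\in\GL_2(\F)$ and $z\in\bF$, the definition of $\theta_T$ gives at once
$$T(\varphi(z)) = \theta_T(z)\,\varphi(T(z)),$$
since $T(\varphi(z))$ and $\varphi(T(z))$ are two representatives of the same point $T(z)\in\bF$ and $\theta_T(z)$ is exactly the scalar relating them. Because a homogeneous $P$ of degree $k-1$ satisfies $P(c\,w)=c^{k-1}P(w)$, this upgrades, for the (again homogeneous of degree $k-1$) polynomial $P\circ T$, to the evaluation rule
$$(P\circ T)(z) = \theta_T(z)^{k-1}\,P(T(z)).$$

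For the converse, suppose $\alpha_i'=T(\alpha_i)$ and $f'\circ T_{L_\alpha}\equiv\theta_{T,L_\alpha}^{k-1}\cdot f$, say $f'(T(z))=\lambda\,\theta_T(z)^{k-1}f(z)$ for some $\lambda\in\F^*$. I would take a generic codeword $(f'(\alpha_i')P'(\alpha_i'))_i$ of $C'$ and rewrite its $i$-th entry, using the two displayed identities, as $\lambda\,f(\alpha_i)(P'\circ T)(\alpha_i)$. As $P'$ ranges over all homogeneous forms of degree $k-1$, so does $\lambda(P'\circ T)$, because $P\mapsto P\circ T$ is a linear automorphism of that $k$-dimensional space; hence $C'=\CC_k(\alpha,f)=C$. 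This step is purely computational.

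The direct implication is the substance. Fixing the monomial basis of the degree-$(k-1)$ forms, a generator matrix of $C=\CC_k(\alpha,f)$ has columns $g_i=f(\alpha_i)\,\nu(\varphi(\alpha_i))$, where $\nu\colon\F^2\rightarrow\F^k$ is the degree-$(k-1)$ Veronese map, and similarly $C'$ has columns $g_i'=f'(\alpha_i')\,\nu(\varphi(\alpha_i'))$. Since $C=C'$ is a single $k$-dimensional subspace, the two matrices differ by some $M\in\GL_k(\F)$, so $g_i'=M g_i$ for all $i$. Projectively this says that $M$ carries the points $[\nu(\varphi(\alpha_i))]$ of the rational normal curve $\mathcal R=\nu(\bF)\subseteq\pr^{k-1}$ onto the points $[\nu(\varphi(\alpha_i'))]$, which again lie on $\mathcal R$. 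The hypothesis $2\le k\le n-2$ gives $n\ge k+2$ distinct points, and points of a rational normal curve are always in general position, so both $\mathcal R$ and $M(\mathcal R)$ are rational normal curves through the same $\ge k+2$ points in general position. The main obstacle is the one geometric input: the classical rigidity statement that a rational normal curve in $\pr^{k-1}$ is the unique such curve through $k+2$ points in general position, which I would invoke over the algebraic closure of $\F$ and then descend. It forces $M(\mathcal R)=\mathcal R$, so $M$ restricts to an automorphism of $\mathcal R\cong\pr^1$, i.e. a homography; rationality of $M$ makes this $T\in\PGL_2(\F)$, and comparing projective classes yields $\alpha_i'=T(\alpha_i)$.

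It remains to pin down the scaling. I would pick a lift $\tilde T\in\GL_2(\F)$ of $T$ and observe that $M$ and the symmetric power $\mathrm{Sym}^{k-1}(\tilde T)$, which both send each $[\nu(\varphi(\alpha_i))]$ to $[\nu(\varphi(\alpha_i'))]$, agree projectively on a projective frame contained in the $n\ge k+1$ points of $\mathcal R$; hence $M=\lambda\,\mathrm{Sym}^{k-1}(\tilde T)$ for some $\lambda\in\F^*$. Feeding this into $g_i'=M g_i$ and applying $\nu(\tilde T\varphi(\alpha_i))=\nu(\theta_T(\alpha_i)\varphi(\alpha_i'))=\theta_T(\alpha_i)^{k-1}\nu(\varphi(\alpha_i'))$ gives $f'(\alpha_i')=\lambda\,\theta_T(\alpha_i)^{k-1}f(\alpha_i)$, which is precisely $f'\circ T_{L_\alpha}\equiv\theta_{T,L_\alpha}^{k-1}\cdot f$. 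Finally, the ``in particular'' clause about permutation equivalence follows by applying the equivalence just proved to $C'$ and $\sigma(\CC_k(\alpha,f))=\CC_k(\sigma(\alpha),f)$, together with $L_{\sigma(\alpha)}=L_\alpha$.
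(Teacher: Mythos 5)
Your argument is essentially correct, but note that the paper does not prove this statement at all: it is imported verbatim as \cite[Theorem~2.6]{H}, whose proof in turn rests on D\"ur's analysis \cite{D} of normal rational curves. So you are not diverging from the paper's proof so much as supplying one. What you have reconstructed is, in substance, the D\"ur--Huffman argument: the identity $T(\varphi(z))=\theta_T(z)\varphi(T(z))$ makes the ``if'' direction a one-line change of variables $P'\mapsto \lambda(P'\circ T)$ on the space of degree-$(k-1)$ forms; for the ``only if'' direction, viewing the columns of a generator matrix as scaled Veronese points and using $C=C'$ to produce $M\in\GL_k(\F)$ with $g_i'=Mg_i$ is exactly the right move, and the hypothesis $k\le n-2$ enters precisely where you use it, to guarantee $n\ge k+2$ points so that the rational normal curve is pinned down. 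The recovery of the scaling relation via $M=\lambda\,\mathrm{Sym}^{k-1}(\tilde T)$ (both maps agreeing projectively on a frame among the $n\ge k+1$ points) and the reduction of the permutation-equivalence clause to the equality clause via $\sigma(\CC_k(\alpha,f))=\CC_k(\sigma(\alpha),f)$ are both sound.

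The one place you should be more careful is the ``classical rigidity'' input: uniqueness of the rational normal curve through $k+2$ points in general position in $\pr^{k-1}$. Over a finite field this is not free folklore --- the delicacy of this statement for arcs in $\PG(k-1,q)$ is exactly what \cite{D} is about, and your plan to ``invoke it over the algebraic closure and descend'' needs the two further (easy but necessary) observations that general position is preserved under field extension and that the induced automorphism of $\mathcal{R}\cong\pr^1$ is defined over $\F$ because it sends at least three $\F$-rational points to $\F$-rational points. With an explicit reference for the rigidity statement (e.g. to \cite{D} or to the standard fact for $d+3$ points in $\pr^d$), your proof is complete; without it, that single sentence is carrying the entire forward implication.
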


Note that the equivalence $f'\circ T_{L_{\alpha}} \equiv \theta_{T,L_{\alpha}}^{k-1}\cdot
f$, in Theorem~\ref{Equal} does not depend on the representative of $T$ in $\GL_2(\F)$
used to calculate $\theta_T$, because $\theta_{\lambda T}=\lambda \theta_T$, for every
$\lambda\in \F^*$.

Given $f:L\rightarrow \F^*$ with $|L|=n$ we denote
    $$\Gamma_{k,f} = \{T\in \Gamma_L : f\circ T_L \equiv \theta_{T,L}^{k-1}\cdot
    f\}$$

\begin{proposition}
If $C$ is a $k$-dimensional Cauchy code of length $n$ with $2\le k \le n-2$ and scaling
map $f:L\rightarrow \F^*$ then the map $T\mapsto T_L$ is a group isomorphism
$\Gamma_{k,f}\rightarrow \PAut(C)$.
\end{proposition}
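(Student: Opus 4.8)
The plan is to exhibit $T \mapsto T_L$ as an injective group homomorphism whose image is exactly $\PAut(C)$, identifying $S_n$ with the symmetric group on the location set $L=L_\alpha$ via the bijection $\N_n \to L$, $i \mapsto \alpha_i$ (a bijection because $\alpha$ has no repeated entries and $|L|=n$). Under this identification each $T\in\Gamma_L$ gives the permutation $\hat T$ determined by $\alpha_{\hat T(i)}=T(\alpha_i)$, and since $\Gamma$ acts on $\bF$ on the left, restriction to the invariant set $L$ makes $T\mapsto T_L$, equivalently $T\mapsto\hat T$, a group homomorphism; this settles the homomorphism property essentially for free.

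Before anything else I would check that $\Gamma_{k,f}$ is genuinely a subgroup of $\Gamma_L$. For this I would record a cocycle identity for $\theta$: writing $T(\varphi(z))=\mu_T(z)\,\varphi(T(z))$ with $\mu_T(z)\in\F^*$, one checks directly from the definition that $\theta_T(z)=\mu_T(z)$ (whether or not $T(z)=\infty$), whence $\theta_{TS}=(\theta_T\circ S)\cdot\theta_S$. Feeding this into the defining condition $f\circ T_L\equiv\theta_{T,L}^{k-1}\cdot f$ shows that $\Gamma_{k,f}$ is closed under products and contains the identity (as $\theta_{\mathrm{Id}}=1$); since $\Gamma_L$ is finite, a nonempty multiplicatively closed subset is a subgroup.

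For injectivity I would use that a homography is determined by its values on any three distinct points, and here $|L|=n\ge k+2\ge 4$, so $\hat T=\hat S$ forces $T=S$ on at least three points of $\bF$ and hence $T=S$ in $\PGL_2(\F)$. The core step, which I expect to carry the real content, is identifying the image with $\PAut(C)$. Here I would invoke the two facts already available: $\sigma(\CC_k(\alpha,f))=\CC_k(\sigma(\alpha),f)$ and Theorem~\ref{Equal}. A permutation $\sigma$ lies in $\PAut(C)$ iff $\CC_k(\sigma(\alpha),f)=\CC_k(\alpha,f)$; since both codes share the location set $L$, Theorem~\ref{Equal} says this happens iff there is $T\in\PGL_2(\F)$ with $\sigma(\alpha)_i=T(\alpha_i)$ for all $i$ and $f\circ T_L\equiv\theta_{T,L}^{k-1}\cdot f$. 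The first equality forces $T(L)=L$ and reads $\alpha_{\sigma^{-1}(i)}=\alpha_{\hat T(i)}$, i.e. $\sigma^{-1}=\hat T$, while the second is precisely the condition $T\in\Gamma_{k,f}$. Thus $\sigma\in\PAut(C)$ iff $\sigma=\widehat{T^{-1}}$ for some $T\in\Gamma_{k,f}$; since $\Gamma_{k,f}$ is a group this is the same as $\sigma=\hat S$ for some $S\in\Gamma_{k,f}$, so the image of $T\mapsto\hat T$ is exactly $\PAut(C)$. Combining the three steps yields the desired isomorphism. The main obstacle is bookkeeping: keeping the direction of the permutation consistent (the distinction between $\sigma$ and $\sigma^{-1}$, given the left action $\sigma(e_i)=e_{\sigma(i)}$) so that the translation through Theorem~\ref{Equal} lands on $\Gamma_{k,f}$ rather than on a twisted variant.
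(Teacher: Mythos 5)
Your proof is correct and follows essentially the same route as the paper's: translate $\sigma\in\PAut(C)$ through the identity $\sigma(\CC_k(\alpha,f))=\CC_k(\sigma(\alpha),f)$ and Theorem~\ref{Equal} to identify the image of $T\mapsto T_L$ with $\PAut(C)$, and obtain injectivity from the fact that a homography is determined by its values on three points, using $|L|=n\ge k+2\ge 4$. The only difference is that you additionally verify, via the cocycle identity for $\theta$, that $\Gamma_{k,f}$ is a subgroup, and you track the $\sigma$ versus $\sigma^{-1}$ bookkeeping explicitly; the paper leaves both of these details implicit.
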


\begin{proof}
Let $C=\CC_k(\alpha,f)$ with $L=L_{\alpha}$. If $\sigma \in S_n$ then
$\sigma(C)=\CC_k(\sigma(\alpha),f)$. Using this and Theorem~\ref{Equal} and identifying
$S_n$ with the permutations of $L$ one has
    \begin{equation}\label{PAut}
    \PAut(C) = \left\{T_L : T\in \Gamma_{L,f}\right\}.
    \end{equation}
The map $T\mapsto T_L$ is injective because $|L|=n>k\ge 2$ and an element of $\PGL_2(K)$
is uniquely determined by its action on three different elements of $\bF$.
\end{proof}

Now it is clear that we have the following consequence of Theorem~\ref{characterization}.

\begin{corollary}\label{GCCauchy}
Let $L\subseteq \bF$ with $n=|L|$ and $2\le k \le n-2$. Let $C$ be a Cauchy code with
scaling map $f:L\rightarrow \F^*$. Then $C$ is a left $G$-code if and only if $G$ is
isomorphic to a subgroup $H$ of order $n$ of $\Gamma_{k,f}$ whose action by homographies
on $L$ is transitive (equivalently, $h(x)\ne x$, for every $x\in L$ and $1\ne h\in H$).
\end{corollary}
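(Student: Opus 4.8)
The plan is to read off the corollary directly from Theorem~\ref{characterization}(1) once everything is transported through the isomorphism furnished by the preceding Proposition. First I would recall that $T\mapsto T_L$ is a group isomorphism $\Gamma_{k,f}\to\PAut(C)$ and---identifying $S_n$ with the symmetric group on $L$---observe that this isomorphism intertwines the two natural actions: a class $T\in\Gamma_{k,f}$ acts on $L$ by the homography $T_L$, which is precisely the permutation $T_L\in\PAut(C)\subseteq S_n$. Consequently, for any subgroup $H\le\Gamma_{k,f}$, its image $H_L\le\PAut(C)$ has the same order as $H$, and $H$ acts transitively on $L$ by homographies if and only if $H_L$ is a transitive subgroup of $S_n$.

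With this dictionary in place the argument is immediate. By Theorem~\ref{characterization}(1), $C$ is a left $G$-code if and only if $G$ is isomorphic to a transitive subgroup of $S_n$ contained in $\PAut(C)$; since $|G|=n$, such a subgroup necessarily has order $n$. Pulling this subgroup back along the isomorphism $\Gamma_{k,f}\to\PAut(C)$ yields a subgroup $H\le\Gamma_{k,f}$ of order $n$ acting transitively on $L$ and isomorphic to $G$, and conversely every such $H$ pushes forward to a transitive subgroup of $\PAut(C)$ of order $n$ isomorphic to $H$. This establishes the stated equivalence.

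Finally I would justify the parenthetical reformulation of transitivity. For a group $H$ of order $n$ acting on the set $L$ of size $n$, the orbit--stabilizer relation shows that the action is transitive if and only if every point-stabilizer is trivial, i.e. $h(x)\ne x$ for all $x\in L$ and $1\ne h\in H$; here one uses that $|H|=|L|=n$ forces a single orbit precisely when the stabilizers vanish. I do not expect any genuine obstacle, since the whole content is the bookkeeping that $T\mapsto T_L$ preserves both order and transitivity. The only point requiring a line of care is verifying that the homographic action of $\Gamma_{k,f}$ on $L$ coincides with the permutation action of $\PAut(C)$ on $\N_n$ under the chosen identification, which is exactly what makes ``transitive subgroup of $S_n$'' and ``subgroup of $\Gamma_{k,f}$ transitive on $L$'' interchangeable.
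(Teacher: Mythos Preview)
Your proposal is correct and is exactly the argument the paper has in mind: the paper gives no explicit proof, merely stating that the corollary is a clear consequence of Theorem~\ref{characterization} (together with the preceding Proposition identifying $\Gamma_{k,f}$ with $\PAut(C)$ via $T\mapsto T_L$). You have simply spelled out the bookkeeping---the compatibility of the homographic and permutation actions, and the orbit--stabilizer justification of the parenthetical---that the authors leave implicit.
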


As an application of Corollary~\ref{GCCauchy} we characterize when some Cauchy codes with
large location set are group codes.

\begin{lemma}\label{Cambio}
Let $L$ and $L'$ be subsets of $\bF$ with $|L|=|L'|\ge q-2$. Then every Cauchy code with
location set $L$ is equal to a Cauchy code with location set $L'$.
\end{lemma}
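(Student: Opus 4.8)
The plan is to reduce everything to Theorem~\ref{Equal}, which characterizes when two Cauchy codes coincide. Write the given code as $C=\CC_k(\alpha,f)$ with location set $L=L_\alpha$ (recall the standing assumption $2\le k\le n-2$, so Theorem~\ref{Equal} applies). I want to produce a location vector $\alpha'$ with $L_{\alpha'}=L'$ and a scaling map $f':L'\to\F^*$ so that $\CC_k(\alpha',f')=C$. By Theorem~\ref{Equal} it suffices to exhibit a single homography $T\in\PGL_2(\F)$ with $T(L)=L'$: once such a $T$ is at hand, I set $\alpha'_i=T(\alpha_i)$, so that $L_{\alpha'}=T(L)=L'$ and the entries of $\alpha'$ stay distinct because $T$ is injective, and I define $f'$ on $L'=T(L)$ by $f'(T(z))=\theta_T(z)^{k-1}f(z)$ for $z\in L$. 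This $f'$ takes values in $\F^*$ and satisfies $f'\circ T_L=\theta_{T,L}^{k-1}\cdot f$, so Theorem~\ref{Equal} immediately yields $\CC_k(\alpha,f)=\CC_k(\alpha',f')$, a Cauchy code with location set $L'$.

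Thus the whole content of the lemma is the existence of a homography carrying $L$ onto $L'$. Here I would pass to complements inside the projective line: since $|\bF|=q+1$ and $|L|=|L'|\ge q-2$, the sets $S=\bF\setminus L$ and $S'=\bF\setminus L'$ have equal cardinality $m=q+1-|L|\le 3$, and for a bijection $T$ of $\bF$ the condition $T(L)=L'$ is equivalent to $T(S)=S'$. The key tool is that $\Gamma=\PGL_2(\F)$ acts \emph{sharply $3$-transitively} on $\bF$, i.e.\ any bijection between two ordered triples of distinct points of $\bF$ extends (uniquely) to a homography. Hence, when $m=3$, an arbitrary bijection $S\to S'$ already extends to the desired $T$, and then $T(L)=T(\bF\setminus S)=\bF\setminus T(S)=\bF\setminus S'=L'$.

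When $m\le 2$ I would first enlarge $S$ and $S'$ to triples: choose $3-m$ extra points on each side, taken from $L$ and $L'$ respectively and matched by an auxiliary bijection, obtaining triples $\widetilde S\supseteq S$ and $\widetilde S'\supseteq S'$ together with a bijection $\widetilde S\to\widetilde S'$ restricting to the given bijection $S\to S'$. There is always enough room for this because $|L|=q+1-m\ge 3-m$ (equivalently $q\ge 2$), so the required extra points exist and can be chosen distinct. Extending the bijection of triples by sharp $3$-transitivity produces $T\in\Gamma$ with $T(\widetilde S)=\widetilde S'$, whence $T(S)=S'$ and therefore $T(L)=L'$.

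I do not expect a genuinely deep obstacle. The substantive input is the sharp $3$-transitivity of $\PGL_2(\F)$ on $\bF$; everything else is either the translation of ``$T(L)=L'$'' into the scaling-map condition of Theorem~\ref{Equal} via the explicit $f'$ above, or elementary bookkeeping. The one point that needs a little care is the case $m\le 2$, namely verifying that the complements can be completed to triples without running out of points of $\bF$ and that the completion respects the intended correspondence $S\leftrightarrow S'$; this is exactly where the counting inequality $q+1\ge 3$ is used.
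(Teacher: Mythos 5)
Your proof is correct and follows essentially the same route as the paper: reduce to Theorem~\ref{Equal} by transporting the scaling map along a homography $T$ with $T(L)=L'$, and obtain $T$ from the triple transitivity of $\PGL_2(\F)$ acting on the complements, which have at most three elements. Your extra care in padding the complements to triples when $|\bF\setminus L|<3$ is a detail the paper leaves implicit but changes nothing of substance.
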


\begin{proof}
By assumption $|\bF\setminus L|=|\bF\setminus L'|\le 3$. Since the action of $\Gamma$ on
$\bF$ is triply transitive there is $T\in \GL_2(\F)$ such that $T(\bF\setminus
L)=\bF\setminus L'$ and so $T(L)=L'$. Therefore, by Theorem~\ref{Equal}, if $C$ is a
Cauchy code with scaling map $f:L\rightarrow \F^*$ then $C$ is also a Cauchy code with
scaling map $(\theta_{T,L}^{k-1}\cdot f)\circ T_L\inv:L'\rightarrow \F^*$.
\end{proof}

It is well known that the parity check extension of a narrow sense Reed-Solomon codes is a $q$-ary Cauchy $G$-code of length $q$, for
$G$ the $p$-elementary abelian group of order $q$ (see e.g. \cite{LM}). Next theorem
shows that this is the only possible structure of $G$-code on a $q$-ary Cauchy code of
length $q$.

\begin{theorem}\label{ERS}
Let $C$ be a $q$-ary Cauchy code of length $q$ and dimension $2\le k \le q-2$ and let $G$
be a group of order $q$. Then $C$ is a left $G$-code if and only if it is permutation
equivalent to the parity check extended narrow sense Reed-Solomon code and $G$ is $p$-elementary
abelian.
\end{theorem}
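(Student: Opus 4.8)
The plan is to combine Corollary~\ref{GCCauchy} with the Sylow structure of $\PGL_2(\F)$ and then solve a functional equation forcing the scaling map to be constant. Since $C$ has length $q$, its location set $L$ satisfies $|L|=q=|\bF|-1$, so $\bF\setminus L$ is a single point. As $q\ge q-2$, Lemma~\ref{Cambio} lets me replace $C$ by an equal Cauchy code whose location set is $\F$ (move the omitted point to $\infty$), with a new scaling map $f:\F\rightarrow \F^*$; this does not change the subspace $C\subseteq \F^q$. By Corollary~\ref{GCCauchy}, $C$ being a left $G$-code then means that $G$ is isomorphic to a subgroup $H\le \Gamma_{k,f}$ with $|H|=q$ acting transitively on $\F$.

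The group-theoretic heart comes next. Since $|H|=q=p^m$ and $|\PGL_2(\F)|=q(q-1)(q+1)$ with $p\nmid (q-1)(q+1)$, the subgroup $H$ is a Sylow $p$-subgroup of $\PGL_2(\F)$. All such subgroups are conjugate to the translation group $N$ of (\ref{NH}), which is elementary abelian of order $q$; this already yields that $G\cong H$ must be $p$-elementary abelian. Moreover $H\le \Gamma_{k,f}\subseteq \Gamma_{\F}$ stabilizes $\F$ and hence fixes $\infty$, so $H$ is a Sylow $p$-subgroup of the Borel subgroup $\Gamma_{\infty}$ of order $q(q-1)$. As $N$ is the unique (indeed normal) Sylow $p$-subgroup of $\Gamma_{\infty}$, I conclude $H=N=\{U_b:b\in \F\}$.

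Finally I would extract the functional equation. For $z=x\in \F$ one computes $U_b(\varphi(x))=(x+b,1)$, whose second coordinate is nonzero, so $\theta_{U_b,\F}\equiv 1$ and therefore $\theta_{U_b,\F}^{k-1}\equiv 1$ regardless of $k$. The membership $N\le \Gamma_{k,f}$ then reads $f(x+b)=\lambda_b f(x)$ for all $x,b\in \F$ and suitable $\lambda_b\in \F^*$. Setting $g=f/f(0)$ turns this into $g(x+b)=g(x)g(b)$, i.e. $g$ is a homomorphism from the additive group $(\F,+)$, of exponent $p$, into $\F^*$, whose order $q-1$ is coprime to $p$. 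Such a homomorphism is trivial, so $f$ is constant, hence $f\equiv 1$ up to equivalence. Thus $C=\CC_k(\alpha',1)$ with location set $\F$, which is permutation equivalent to the parity check extended narrow sense Reed-Solomon code. For the converse, recalled before the theorem: $N$ is $p$-elementary abelian of order $q$, lies in $\Gamma_{k,1}$ because $\theta_{U_b}\equiv 1$, and acts regularly on $\F$, so by Corollary~\ref{GCCauchy} the extended code is a left $N$-code, and every $p$-elementary abelian group of order $q$ is isomorphic to $N$.

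The main obstacle I anticipate is the middle step: recognizing that the order-$q$ subgroups of $\PGL_2(\F)$ are exactly the Sylow $p$-subgroups and pinning down the one fixing $\infty$, so that stabilizing $\F$ forces $H$ to be the translation group $N$. Once $H=N$ is identified, the vanishing of $\theta_{U_b}$ and the triviality of additive characters into $\F^*$ make the conclusion that $f$ is constant essentially immediate, and the role of the hypothesis $2\le k\le q-2$ is only to make the machinery of Corollary~\ref{GCCauchy} available.
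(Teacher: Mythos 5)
Your proposal is correct and follows essentially the same route as the paper: reduce to $L=\F$ via Lemma~\ref{Cambio}, identify the only order-$q$ subgroup of $\Gamma_{\F}=\Gamma_\infty$ as the translation group $N$ (forcing $G$ to be $p$-elementary abelian), and then use $\theta_{U_b}\equiv 1$ to turn membership of $N$ in $\Gamma_{k,f}$ into a functional equation forcing $f$ to be constant. Your character-theoretic phrasing of that last step (a homomorphism from an exponent-$p$ group into $\F^*$ is trivial) is just a repackaging of the paper's computation $\lambda^p=1\Rightarrow\lambda=1$.
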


\begin{proof}
One implication was proved by Landrock and Manz \cite{LM}. Conversely, let $C$ be a $q$-ary Cauchy code of length $q$ with
scaling map $f:L\rightarrow \F^*$. By Lemma~\ref{Cambio}, one may assume, without loss of
generality, that $L=\F$. Then $\Gamma_{\F}=\Gamma_{\infty} = N\rtimes H$, with $N$ and
$H$ as in (\ref{NH}). Thus $N$ is a normal Sylow elementary abelian $p$-subgroup of
$\Gamma_{\F}$ and hence it is the only subgroup of order $q$ of $\Gamma_{\F}$. Therefore,
if $C$ is a $G$-code then $G\simeq N$. Moreover, $N$ should be contained in
$\Gamma_{k,f}$ and this implies that $f\circ U_b \equiv f$, for every $b\in \F$, because
$\theta_{U_b}(z)=1$ for every $b\in \F$ and $z\in \bF$. Thus, there is $\lambda\in \F^*$
with $f\circ U_b = \lambda f$. So, $f(0)=f(pb)=(f\circ U_b)((p-1)b) = \lambda f((p-1)b) =
\dots = \lambda^p f(0)$ and hence $\lambda^p=1$. Then $\lambda=1$, since $|\F^*|=q-1$ is
relatively prime with $p$. Hence $f(b)=(f\circ U_b)(0) = f(0)$, for every $b\in \F$, i.e.
$f$ is constant. Therefore $C$ is permutation equivalent to the parity check extended
narrow sense Reed-Solomon code.
\end{proof}

Now we turn our attention to Reed-Solomon like Cauchy codes. It is well known that the
$q$-ary Reed-Solomon codes are cyclic of length $q-1$.

\begin{theorem}\label{RS}
Let $\F=\F_q$, the field of $q$ elements, and let $\xi$ be a primitive element of $\F$.
Let $C$ be a $k$-dimensional $q$-ary Cauchy code with scaling map $f:L\rightarrow \F^*$
and assume that $|L|=q-1$ and $2\leq k\leq q-3$.

Then $C$ is a left group code if and only one of the following conditions hold for some
(for all) $T\in \PGL_2(\F)$ with $T(L)=\F^*$:
\begin{enumerate}
\item there is an integer $m$ such that
    $\left( \theta_{T,L}^{k-1}\cdot f\right ) \circ T\inv_L$ is equivalent to the map
    $f_m:\F^*\rightarrow \F^*$ given by $f_m(z)=z^m$ for every $z\in \F^*$.
\item
$q$ is odd and there are integers $m$ and $m'$ such that $4m+2(k-1)\equiv 2m'+k-1\equiv 0
\text{ mod } (q-1)$ and $\left( \theta_{T,L}^{k-1}\cdot f\right ) \circ T\inv_L$ is
equivalent to the map $f_{m,m'}:\F^*\rightarrow \F^*$ given by
    $$f_{m,m'}(\xi^{2t+r})=\xi^{2tm+rm'}, \quad \text{ for every } t\in \Z
    \text{ and } r=0 \text{ or }1.$$
\end{enumerate}

Moreover, if $G$ is a group of order $q-1$ then $C$ is a left $G$-code if and only if
either $G$ is cyclic and condition (1) holds or $G$ is a dihedral group and condition (2)
holds.
\end{theorem}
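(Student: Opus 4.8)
The plan is to apply Corollary~\ref{GCCauchy} after moving the location set to $\F^*$. Since $|L|=q-1\ge q-2$, Lemma~\ref{Cambio} lets us replace $L$ by $\F^*$: choosing any $T\in\PGL_2(\F)$ with $T(L)=\F^*$, the code $C$ becomes the Cauchy code with location set $\F^*$ and scaling map $g=(\theta_{T,L}^{k-1}\cdot f)\circ T_L\inv$, which is exactly the map appearing in conditions (1) and (2). By Corollary~\ref{GCCauchy}, $C$ is a left $G$-code if and only if $G$ is isomorphic to an order $q-1$ subgroup of $\Gamma_{k,g}$ acting regularly on $\F^*$. So the whole proof reduces to (a) finding all regular subgroups of order $q-1$ of the ambient group $\Gamma_{\F^*}$, and (b) translating membership of each such subgroup in $\Gamma_{k,g}$ into a condition on $g$.

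For step (a) I would observe that a homography fixing $\F^*$ setwise must permute $\bF\setminus\F^*=\{0,\infty\}$, so $\Gamma_{\F^*}=\Gamma_{\{0,\infty\}}$ consists of the maps $s_a:z\mapsto az$ and $w_c:z\mapsto c/z$ with $a,c\in\F^*$; this is a dihedral group of order $2(q-1)$ whose cyclic rotation subgroup is $D=\{s_a:a\in\F^*\}$ (the regular multiplication action of $\F^*$ on itself) and whose reflections are the $w_c$. A regular subgroup has order $q-1$, hence index $2$. I would then list the index $2$ subgroups of this dihedral group and test fixed points: $s_a$ with $a\ne1$ is always fixed-point-free, whereas $w_c$ has a fixed point in $\F^*$ exactly when $c$ is a square (as $z^2=c$ is then solvable). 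This shows $D$ is always regular, and that when $q$ is odd the only other regular subgroup of order $q-1$ is the dihedral group $H=\GEN{s_{\xi^2},w_\xi}$, whose rotations are the $s_a$ with $a$ a square and whose reflections are the $w_c$ with $c$ a non-square; when $q$ is even every $w_c$ has a fixed point, so $D$ is the only one. Both $D$ and $H$ have index $2$ in $\Gamma_{\F^*}$, hence are normal, a fact I will use at the end.

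For step (b) I would compute $\theta$ on the two types of maps, getting $\theta_{s_a}(z)=1$ and $\theta_{w_c}(z)=z$ for $z\in\F^*$. Membership $D\subseteq\Gamma_{k,g}$ reduces, via the generator $s_\xi$, to $g(\xi z)=\lambda\,g(z)$ for a constant $\lambda$, i.e. $g/g(1)$ is a multiplicative character of $\F^*$, which is precisely $g\equiv f_m$ with $\lambda=\xi^m$; this is condition (1). For the dihedral group I would first note that $s_{\xi^2}\in\Gamma_{k,g}$ already forces $g$ to have the shape $g\equiv f_{m,m'}$, and then impose $w_\xi\in\Gamma_{k,g}$, i.e. $g(\xi/z)=\mu\,z^{k-1}g(z)$. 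Evaluating this on squares $z=\xi^{2t}$ and on non-squares $z=\xi^{2t+1}$, the requirement that the resulting scalar $\mu$ be independent of $t$ gives the first congruence $4m+2(k-1)\equiv0$, and matching $\mu$ between the square and non-square cases gives the second congruence $2m'+k-1\equiv0\pmod{q-1}$; this is condition (2). This bookkeeping with squares versus non-squares, and the cross-consistency of $\mu$ forcing the second congruence, is the main computational obstacle.

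Finally I would assemble the equivalence. By step (a) a regular subgroup of $\Gamma_{k,g}$ of order $q-1$ is either $D$ (cyclic) or, for $q$ odd, $H$ (dihedral), and these are non-isomorphic in the admissible range $q\ge5$; by step (b) $D\subseteq\Gamma_{k,g}$ is condition (1) and $H\subseteq\Gamma_{k,g}$ is condition (2). Corollary~\ref{GCCauchy} then yields the ``moreover'' statement verbatim, and taking the union over $G$ gives the first statement. The only remaining point is the ``for some (for all) $T$'' clause: changing $T$ replaces $g$ by $g'$ with $\Gamma_{k,g'}=S\,\Gamma_{k,g}\,S\inv$ for some $S\in\Gamma_{\F^*}$ (this is Theorem~\ref{Equal} read through (\ref{PAut})), and since $D$ and $H$ are normal in $\Gamma_{\F^*}$ the conditions $D\subseteq\Gamma_{k,g}$ and $H\subseteq\Gamma_{k,g}$ are invariant under this conjugation; hence conditions (1) and (2) hold for one admissible $T$ if and only if they hold for all.
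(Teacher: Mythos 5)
Your proposal is correct and follows essentially the same route as the paper's proof: reduce to location set $\F^*$ via Lemma~\ref{Cambio}, identify $\Gamma_{\F^*}$ as a dihedral group of order $2(q-1)$, isolate its regular subgroups of order $q-1$ (the rotation subgroup always, plus the one dihedral subgroup whose reflections $w_c$ have $c$ a non-square when $q$ is odd), and translate membership in $\Gamma_{k,f}$ into the conditions $f\equiv f_m$ or $f\equiv f_{m,m'}$ with the stated congruences via Corollary~\ref{GCCauchy}. The only cosmetic differences are that you detect the excluded subgroup by a fixed-point count rather than the paper's observation that $B(1)=1$, and you make the ``for some (for all) $T$'' clause explicit via normality of the two candidate subgroups, which the paper handles implicitly by normalizing $f(1)=1$.
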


\begin{proof}
By Lemma~\ref{Cambio} (and Theorem~\ref{Equal}), it is enough to prove that if $L=\F^*$  and $f(1)=1$ then $C$ is a
left $G$-code if and only if either $G$ is cyclic and $f=f_m$, for some $m$, or $q-1$ is even, $G$
is the dihedral group of order $q-1$ and $f=f_{m,m'}$, for some $m$ and $m'$ satisfying
the conditions of (2).

So assume that $L=\F^*$ and $f(1)=1$. Let $D=\Gamma_{\F^*}=\Gamma_{\{0,\infty\}}
\supseteq \Gamma_{k,f}$.
Then $D=\GEN{A,B}$ with
    $$A=\pmatriz{{cc} \xi & 0 \\ 0 & 1} \quad \text{and} \quad B=\pmatriz{{cc} 0 & 1 \\ 1 & 0}.$$

 So, $D$ is a dihedral group of order
$2(q-1)$. If $q$ is even then the only subgroup of $D$ of order $q-1$ is $\GEN{A}$.
Otherwise, that is if $q$ is odd then, the only subgroups of $D$ of order $q-1$ are
$\GEN{A}$, $\GEN{A^2,B}$ and $\GEN{A^2,AB}$. By Corollary~\ref{GCCauchy}, $C$ is a left
$G$-code if and only if $G$ is isomorphic to a subgroup $H$ of order $q-1$ of
$\Gamma_{k,f}$ which acts transitively on $\F^*$. However $B(1)=1$, and this excludes the
possibility $H=\GEN{A^2,B}$. Thus $C$ is a left $G$-code if and only if either $G$ is
cyclic and $A\in \Gamma_{k,f}$, or $q$ is odd, $G$ is the dihedral group of order $q-1$
and $A^2$ and $M=AB$ belong to $\Gamma_{k,f}$.

Notice that $\theta_A(z)=1$ for every $z\in \F$ and $(f_m\circ A)(z)=f_m(\xi z)=\xi^m
f_m(z)=\xi^m (\theta_A\cdot f_m)(z)$ for every $z\in \F$. Thus $A\in \Gamma_{k,f_m}$ and
hence if $f=f_m$ then $C$ is a cyclic group code. Conversely, if $C$ is a cyclic group
code then $A\in \Gamma_{k,f}$, that is $f\circ A_{\F^*} \equiv f$. Hence there is
$m=0,1,\dots,q-2$ such that $f(\xi z)=\xi^m f(z)$ for every $z$. This implies, together
with the assumption $f(1)=1$, that $f(\xi^i)=\xi^{im}$ and so $f=f_m$.

Secondly assume that $q$ is odd and let $G$ be the dihedral group of order $q-1$. On the one hand, arguing as in the
previous paragraph one deduces that $A^2\in \Gamma_{k,f}$ if and only if there is an integer $0\le m_1 < q-1$ such that
$f(\xi^{2t+r})=\xi^{tm_1}f(\xi)^r$, for every $t\in \Z$ and $r=0$ or $1$. Moreover $m_1$ is even because
$1=f(1)=f\left(\xi^{2\frac{q-1}{2}}\right)=\xi^{\frac{q-1}{2}m_1}$. Set $m_1=2m$. On the other hand, $M(z)=\xi/z$
and $\theta_{M}(z)=z$, for every $z\in \F^*$. Hence $M\in \Gamma_{k,f}$ if and only if there is an integer $0\le m'<
q-1$ such that $f(\xi/z)=\xi^{m'}z^{k-1}f(z)$, for every $z\in \F^*$. So, if $C$ is a left $G$-code then there are
integers $m$ and $m'$ such that
    $$\matriz{{rcl}
    \xi^{m'+(2t+r)(k-1)+2tm} f(\xi)^r &=& \xi^{m'+(2t+r)(k-1)} f(\xi^{2t+r}) =
    f\left(\frac{\xi}{\xi^{2t+r}}\right) = f(\xi^{2(-t)+(1-r)}) \\
    &=& \xi^{-2tm}f(\xi)^{1-r},}$$
for every $t\in \Z$ and every $r=0,1$, or equivalently
    $$\xi^{m'+(2t+r)(k-1)+4tm}=f(\xi)^{1-2r}.$$
Considering the two values $r=0$ and $r=1$ we deduce that $f(\xi)=\xi^{m'}$, hence
$f=f_{m,m'}$, and $2(k-1)+4m\equiv 2m'+k-1 \equiv 0 \mod q-1$. Conversely, if $m$ and
$m'$ satisfy the above conditions and $f=f_{m,m'}$ then, reversing the arguments, we
deduce that $A^2,M\in \PGL_2(\F)_{k,f}$ and so $C$ is a left $G$-code.
\end{proof}

\begin{remarks}{\rm
Let $2\le k \le q-3$ and let $E_m$ and $E_{m,m'}$ denote $k$-dimensional Cauchy codes
with scaling maps $f_m$ and $f_{m,m'}$ respectively. By Theorem~\ref{RS}, every
$k$-dimensional left group Cauchy code of length $q-1$ is permutation equivalent to
either $E_m$, for some $m$, or $E_{m,m'}$, for $m$ and $m'$, satisfying the conditions in
statement (2) of Theorem~\ref{RS}. Using Theorem~\ref{Equal} one can decide when two of
these codes are permutation equivalent:

\begin{enumerate}
\item On the one hand, by Theorem~\ref{Equal}, $E_m$
and $E_{m'}$ are permutation equivalent if and only if $f_{m'}\equiv f_m\circ T_{\F^*}$
for some $T\in \GEN{A,B}$ (see the first part of the proof of Theorem~\ref{RS}). On the
other hand, $f_m\circ A \equiv f_m$ and $f_m\circ B \equiv f_{-m}$. This proves that
$E_m$ and $E_{m'}$ are permutation equivalent if and only if $m'\equiv \pm m \mod (q-1)$.

\item On the other hand, in the second case of Theorem~\ref{RS}, $k$ is necessarily odd and
$f_{m,m'}$ is determined by the values of $2m$ and $m'$ modulo $q-1$. The equation
$2X+k-1\equiv 0 \mod (q-1)$ has a unique solution modulo $\frac{q-1}{2}$, namely
$\frac{1-k}{2}$.

\begin{enumerate}
\item
If $q\equiv -1 \mod 4$ then the equation $4X+2(k-1)\equiv 0 \mod (q-1)$ has also a unique
solution modulo $\frac{q-1}{2}$, namely $\frac{1-k}{2}$ and hence, in this case,
$2m\equiv 1-k \mod q-1$. Thus $f_{m,m'}$ is either $f_{\frac{1-k}{2},\frac{1-k}{2}} =
f_{\frac{1-k}{2}}$ or $f_{\frac{1-k}{2},\frac{q-k}{2}} = f_{\frac{q-k}{2}}$. We conclude
that if $q\equiv -1 \mod 4$ then every left group Cauchy code is a cyclic group code.

\item Otherwise, that is for $q\equiv 1 \mod 4$, the equation $4X+2(k-1) \equiv 0 \mod
(q-1)$ has a unique solution modulo $\frac{q-1}{4}$, namely $\frac{1-k}{2}$. Therefore
there are four possible dihedral group Cauchy codes of this form namely
$E_{\frac{1-k}{2},\frac{1-k}{2}}=E_{\frac{1-k}{2}}$,
$E_{\frac{1-k}{2},\frac{q-k}{2}}=E_{\frac{q-k}{2}}$, $E_{\frac{1+q-2k}{2},\frac{1-k}{2}}$
and $E_{\frac{1+q-2k}{2},\frac{q-k}{2}}$. The last two are not cyclic group codes. To see
this it is enough to show that if $f_{m,m'}\equiv f_{m_1}$ then $2m\equiv 2m_1 \mod
(q-1)$ and $m'\equiv m_1 \mod (q-1)$. This follows by straightforward calculations. This
provides more examples of left group codes which are not abelian group codes.
\end{enumerate}
\end{enumerate}
 }\end{remarks}

\begin{example}{\rm
Let $C=\CC_k(\alpha,1)$ be a Cauchy code over $\F=\F_q$ with location set $L$. We have
seen in Theorems~\ref{ERS} and \ref{RS} that if $L=\F$ or $\F^*$ then $C$ is an abelian
group code. More generally, if $K$ is a subfield of $\F$ and $L=K$ then $C$ is a
$(K,+)$-code and if $L=K^*$ then $C$ is a $(K^*,\cdot)$-code. This can be shown as in the
proofs of Theorems~\ref{ERS} and \ref{RS}.

We claim that the assumption $k\le n-2$ implies that if $C$ is a left group code then
$|L|$ divides $q(q-1)$. Assume that $C$ is a left $G$-code. Then $G$ is isomorphic to a
subgroup of $\Gamma_{k,1}$. If $T=\pmatriz{{cc} a&b\\c&d}$ represents an element of
$\Gamma_{k,f}$, with $c\ne 0$ then there is $\lambda\in \F^*$ such that
$(cz+d)^{k-1}=\theta_{T,L}^{k-1}(z)=\lambda_T$ for every $z\in
L\setminus\{\infty,-\frac{d}{c}\}$. Thus the polynomial $(cz+d)^{k-1}-\lambda_T$ has at
least $n-2\ge k$ roots, a contradiction. This proves that $\Gamma_{k,1}\subseteq
\Gamma_{\infty}$ and hence $|L|=|G|$ divides $|\Gamma_{\infty}|=q(q-1)$, as wanted. This
argument, together with the description of $\Gamma_{\infty}=N\rtimes H$ given in
(\ref{NH}), shows that if $|L|$ is coprime with $q-1$ then $G$ is an elementary abelian
$p$-group and if $|L|$ is coprime with $q$ then $G$ is cyclic.
 }\end{example}

\begin{example}{\rm
Theorems~\ref{ERS} and \ref{RS} describes the left group Cauchy codes over $\F_q$ of
dimension $2\le k \le n-2$ and length $n=q$ or $q-1$. Assume now that $C$ is a Cauchy
code of length $q-2$ and dimension $2\le k \le q-4$. If $f:L\rightarrow \F^*$ is the
scaling map of $C$ then every element of $\Gamma_{k,f}$ permutes the three elements of
$\bF\setminus L$. Since every element of $\Gamma$ is determined by its action on three
elements, we deduce that $\Gamma_{k,f}$ is isomorphic to a subgroup of $S_3$ and hence,
if $L$ is a left $G$-code then $G$ is isomorphic to a subgroup of $S_3$. Therefore
$q-2=|L|=|G|$ divides 6. Since, by assumption $2\le n-2=q-4$, we have $q\ge 6$. We then deduce that $q=8$ and $L$ is a left
$S_3$-code.
 }\end{example}

\end{document}